\newtheorem{Theorem}{Theorem}
\newtheorem{Lemma}{Lemma}
\newtheorem{Definition}{Definition}
\newtheorem{Remark}{Remark}
\begin{document}
\title{Distributed Control for Charging Multiple Electric Vehicles with Overload Limitation}
\author{Bo~Yang,~Jingwei~Li,~Qiaoni~Han,~Tian~He,~Cailian~Chen,~Xinping~Guan
\thanks{Copyright (c) 2015 IEEE. Personal use of this material is permitted. Permission from IEEE must be obtained for all other uses, including reprinting/republishing this material for advertising or promotional purposes, collecting new collected works for resale or redistribution to servers or lists, or reuse of any copyrighted component of this work in other works.}
\thanks{B. Yang, Q. Han, C. Chen, and X. Guan are with the Department of Automation, Shanghai Jiao Tong University, and Key Laboratory of System Control and Information Processing, Ministry of Education of China, Shanghai, 200240 P. R. China (e-mail: \{bo.yang, qiaoni, cailianchen, xpguan\}@sjtu.edu.cn).}
\thanks{J. Li was with the Department of Automation, Shanghai Jiao Tong University, Shanghai, 200240 P. R. China. She is currently with Centre for Translational Medicine (MD6), National University of Singapore, 14 Medical Drive, Singapore 117599 (email:  jingwei.li@u.nus.edu).} %
\thanks{T. He is with Department of Computer Science and Engineering, University of Minnesota, MN, USA. Email: tianhe@cs.umn.edu.}}%
\maketitle
\thispagestyle{fancy}
\fancyhead[L]{DOI: 10.1109/TPDS.2016.2533614}
\cfoot{}
\vspace{-0.5cm}
\begin{spacing}{2.0}
\begin{abstract}
Severe pollution induced by traditional fossil fuels arouses great attention on the usage of plug-in electric vehicles (PEVs) and renewable energy. However, large-scale penetration of PEVs combined with other kinds of appliances tends to cause excessive or even disastrous burden on the power grid, especially during peak hours. This paper focuses on the scheduling of PEVs charging process among different charging stations and each station can be supplied by both renewable energy generators and a distribution network. The distribution network also powers some uncontrollable loads. In order to minimize the on-grid energy cost with local renewable energy and non-ideal storage while avoiding the overload risk of the distribution network, an online algorithm consisting of scheduling the charging of PEVs and energy management of charging stations is developed based on Lyapunov optimization and Lagrange dual decomposition techniques. The algorithm can satisfy the random charging requests from PEVs with provable performance. Simulation results with real data demonstrate that the proposed algorithm can decrease the time-average cost of stations while avoiding overload in the distribution network in the presence of random uncontrollable loads.
\end{abstract}
\vspace{-0.5cm}
\begin{IEEEkeywords}
Electric vehicle, charging scheduling, renewable energy, smart grid, Lyapunov optmization.
\end{IEEEkeywords}
\vspace{-0.5cm}
\section{Introduction}
Nowadays pollution engendered by massively burning fossil fuel is serious. Hence more attention is paid on electric vehicles (EVs), to replace the traditional vehicles. The report by International Energy Agency (IEA) \cite{Trigg} shows that the number of global EV sales in 2012 reached 113,000, double of sales in 2011, and predicts that the expected number of EV on the road by 2020 will be up to 20 million. Therefore, massive plug-in electric vehicles (PEVs)/EVs will appear on the road in the near future, leading to huge amount of charging demands. This new kind of demand will surely be a challenge for the power grid, according to the survey conducted by Su \emph{et al.} \cite{SuEichi}. The safety of distribution network to support the huge demand from PEVs/EVs is essential in regulating the demand from vehicles. Parts of EVs' influences on the power grid have been studied in \cite{WuMohsenian-Rad}. Good simulation frameworks in \cite{Darabi} also show the interactions between plug-in hybrid electric vehicles (PHEVs) and the power grid.

Meanwhile, it is possible to integrate more renewable energy sources into the power network due to the two-way communication and regulation capability of smart grid. Studies in \cite{WuSun} give some inspiration for better utilization of distributed renewable energy generation. However, the uncertainty and intermittence of such renewable sources makes it difficult to use them stably and consecutively. A method to tackle this problem is to introduce storage devices, which can store renewable energy when it is abundant, and release the stored energy when renewable sources do not work. The objective of storage is similar with that in Rechargeable Sensor Networks \cite{HeChen}. Moreover, the storage device can be used to cut down the electricity cost by serving the load when electricity price is high. However, the time-diversity pattern of electricity price does not always match that of renewable energy generation, which depends on weather conditions mostly. Thus, how to minimize the non-renewable energy cost through optimal management of the storage device, renewable energy and on-grid energy comes to be a practical problem. As to PEV charging, there are already some real applications of renewable-powered charging stations, such as the wind powered stations in Barcelona, Spain \cite{ugei} and the charging stations with roof-top solar panels in Downtown Westport, Connecticut, USA \cite{Tushar}. Although they are not massively deployed, these examples demonstrate the technique feasibility for charging EVs/PEVs by renewable energy.

Prior works \cite{Rahbari-Asr}-\cite{Ghavami} have considered regulations of EV loads in a distribution network. Rahbari-Asr \emph{et al.} in \cite{Rahbari-Asr} consider to maximize the total utility of PEV/EV users, as well as avoiding line/node overload. The distributed charging regulation algorithms are designed to ensure certain system optimality and line constraints by means of utility maximization framework and game theory in \cite{Ardakanian} and \cite{Ghavami}, respectively. The scheme in \cite{Ardakanian} also considers the effects of residential loads on overloading the same distribution network that provides charging service to EVs. The similar effects are further discussed in \cite{Safdarian}-\cite{MaHu}. In general the above works regard EVs as time shiftable loads and regulate them to charge at different time slots in order to maintain power network safety and decrease electricity cost by utilizing the time diversity of electricity price. In other words, \cite{Ardakanian}-\cite{Angelis} and \cite{Sun1} are more suitable for slow charging scenario. As for fast charging, it usually lasts for less than one hour with large power consumption. Thus where to charge turns to be another practical problem, since it is important to avoid overload at individual place of a distribution network. Saovapakhiran \emph{et al.} propose to guide EVs and HEVs to the optimal charging station in order to maximize system utility in \cite{Saovapakhiran}. \cite{Wang} offers a coordinated charging strategy relying on the vehicle network communication to enhance the cooperation among mobile EVs. Compared with these papers that utilize conventional power grid to supply charging vehicles, Misra \emph{et al.} in \cite{Misra} consider to use pricing to attract EVs to charge at different microgrids, which can be powered by renewable energy sources. However, the intermittent nature of renewable energy is not considered in \cite{Misra}. Recently, we design an online algorithm to guide EV charging at different stations, which are supplied by both on-grid energy and time-varying renewable energy \cite{LiYang}.

As to dealing with the stochastic issue of renewable energy generation, Lyapunov optimization technique \cite{Georgiadis} has been applied to energy systems \cite{Neely1}-\cite{Chen} and wireless communications in smart grid \cite{Shah}. There are also other techniques that can be used to exploit the time-varying nature of renewable energy, such as Markov decision process in \cite{Zhang} and game theory in \cite{Maharjan}. However, the problem formulation in \cite{Zhang}-\cite{Maharjan} has to rely on the priori knowledge of the underlying stochastic process. This paper also utilizes Lyapunov optimization method to regulate PEVs to charge with on-grid power or renewable energy. Moreover, this paper further considers that each charging station is equipped with non-ideal energy storage with inefficient charging and discharging to store extra renewable energy. Thus, the framework of the algorithm designed for ideal storage in the aforementioned articles cannot be applied to this paper directly.

In this paper, a distribution network is considered, where there are multiple charging stations. Also many non-EV loads are connected to every transformer and transmission line in this distribution network. It is supposed that each charging station is equipped with a renewable energy generator and a non-ideal storage, which will absorb the abundant renewable energy. If the energy generated by renewable source is insufficient, charging stations are supposed to draw energy from the local storage or gain energy from the power grid. Hence this paper tries to minimize the time average non-renewable energy cost in a distribution network by coordinating the charging of PEVs and renewable energy management, while satisfying the safety of distribution networks by limiting the overload probability under a certain permitted bound. The contribution of this paper are threefold:

\begin{enumerate}[\indent 1.]
\item A stochastic optimization problem is formulated to minimize the time-average cost for serving PEVs by renewable energy with non-ideal storage and power networks that supply some uncontrollable loads at the same time.
\item An online algorithm consisting of guidance of PEV charging and energy management at stations is proposed based on Lyapunov optimization and Lagrange dual decomposition techniques. The algorithm can result in provable performance with non-ideal storage and ensure the safety of the distribution network.
\item Simulations with real data demonstrate that the asymptotic optimality and comparative waiting time can be achieved.
\end{enumerate}

The rest of this paper is organized as follows: system models are introduced in Section \uppercase\expandafter{\romannumeral2}; problem formulation is given in Section \uppercase\expandafter{\romannumeral3}; an online algorithm is developed in Section \uppercase\expandafter{\romannumeral4}; mathematical proofs of algorithm performance can be found in Section \uppercase\expandafter{\romannumeral5}; simulations are demonstrated in Section \uppercase\expandafter{\romannumeral6}; at last, conclusion of this paper is given in Section \uppercase\expandafter{\romannumeral7}. The key mathematical notations used in the paper are listed in Table 1.
\vspace{-0.5cm}
\begin{table}[width=7.5cm]
  \centering
  \caption{Mathematical notations}
  \label{tab:1}
  \footnotesize
  \begin{tabular}{l|l}
  \hline
  Notation    &Physical interpretation \\
  \hline
  \hline
  $l,\textbf{L}$         &Index and set of transformers and lines (nodes), $\textbf{L}\!=\!\{1,2,...,L\}$  \\
  $i,\textbf{I}$       &Index and set of charging stations, $\textbf{I}\!=\! \{1,2,3,...,N\}$ \\
  $j,\textbf{J}_i$       &Index and set of charging outlets belonging to the $i$-th charging station, $\textbf{J}_i\!=\!\{1,2,...,J_i\}$  \\
  $k,\textbf{K}$   &Index and set of entry points, $\textbf{K}\!=\!\{1,2,...,K\}$ \\
  $[tT, (t+1)T)$          &$t$-th interval with $T$ being the length of a slot   \\
  $P_l$      &Capacity of the $l$-th transformer/line \\
  $X_{li}$     &Indicating variable for the relationship between the $l$-th node and the $i$-th charing station  \\
  $N_l(t)$	        &Amount of uncontrollable loads connected to the $l$-th node in time slot $t$ \\
  $U_i(t)$	        &Production rate of a renewable energy generator for thr $i$-th station \\
  $B_i(t)$          &Energy level of the storage device for the $i$-th charging station \\
  $R_i(t)$       &Charging rate renewable energy charged into the $i$-th charging station's battery \\
  $r_{ij}(t)$      &Charging rate of the charging outlet $j$ in the $i$-th charging station \\
  $D_i^d(t)$       &Energy flow from the grid to charging outlets directly \\
  $\epsilon$       &Maximum probability of the overload chance \\
  $c(t)$        &Power price in time slot $t$ in the system \\
  $\sigma_{N_l}$   &Standard deviation of $N_l(t)$  \\
  $e_k(t)$      &Amount of energy demand at the $k$-th entry point in time slot $t$ \\
  $\eta_k(t)$      &Whether there is a PEV carrying an amount of energy demand of $e_k(t)$ in time slot $t$\\
  $E_k(t)$    &Energy demand arrives at the $k$-th entry point in time slot $t$ \\
  $p_k$    &Probability of PEVs sending request to the $k$-th entry point \\
  $W_{k,ij}(t)$   &Indicating variable for connections between $k$-th entry point and $i$-th charging station's $j$-th outlet in time slot $t$\\
  $Q_{ij}(t)$  &Demand queue of the $j$-th outlet of the $i$-th charging station at the beginning of time slot $t$ \\
  $\eta^+,\eta^-$  &Charging efficiency, discharging efficiency \\
  $H_i(t)$     &Shift version of $B_i(t)$ \\
  $\overrightarrow{\theta(t)}$   &State of system queues at the beginning of time slot $t$ \\
  $\mathcal L(\overrightarrow{\theta(t)})$  &Lyapunov function \\
  $\triangle(\overrightarrow{\theta(t)})$   &Lyapunov drift\\
  \hline
  \end{tabular}
\end{table}
\vspace{-0.8cm}
\section{System Models}
This section introduces the system model of a distribution network, which includes several charging stations, uncontrollable non-PEV loads, and random request of PEVs.
\vspace{-0.5cm}
\subsection{Distribution Network and Charging Stations}
In this paper, a distribution network (Fig. 1), composed of multiple transformers and lines, is considered. The set of transformers and lines is $\textbf{L}\!=\!\{1,2,...,L\}$. Every transformer and transmission line has its power capacity, determined by the type of conductor, the highest temperature permitted, and other physical and environmental factors, and the capacity of the $l$-th ($l\!\in\!\textbf{L}$) transformer/line is denoted by $P_l$. The considered system is modeled as a slotted one and it is referred to the $t$-th interval $\![tT, (t+1)T\!)$, where $T$ is the length of a slot. In the following $T$ is normalized to unit for easier presentation.
\begin{figure}
\centering
\includegraphics[width=90mm]{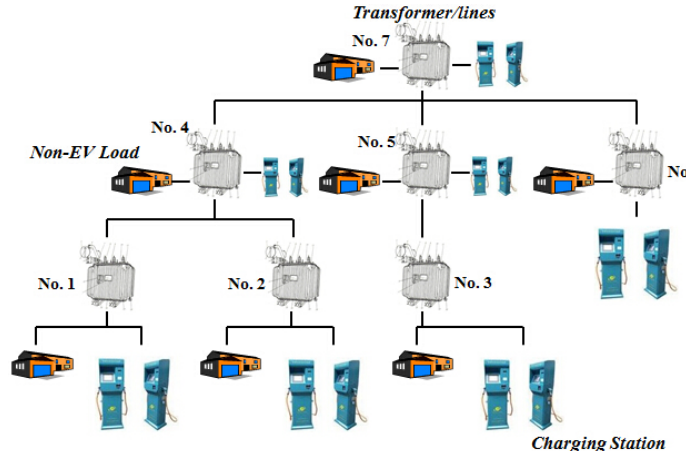}
\caption{Distribution network.}\label{fig_1}
\end{figure}

Let $i\!\in\!\textbf{I}\!=\!\{1,2,\cdots,N\}$ denote the $i$-th charging station and $j\!\in\!\textbf{J}_i\!=\!\{1,2,...,J_i\}$ denote the $j$-th charging outlet in the $i$-th station, where $\textbf{J}_i$ is the set of charging outlets belonging to the $i$-th charging station. In this distribution network, different charging stations are connected to different transformers/lines, represented by different nodes later for consistency. Therefore, the matrix $\textbf{X}_{(L\times I)}$ is used to indicate the relationship between different nodes and charing stations, where $X_{li}\!=\!1$ means that the $i$-th station is downstream of the $l$-th node, otherwise $X_{li}\!=\!0$. It should be noted that the network constraint is not explicitly modelled here as in generic AC networks. We only use the matrix $\mathbf{X}_{\!\left(L\times I\!\right)}$ to reflect the DC power flow relationship between different nodes and charging stations to present the problem clearly. Meanwhile, other non-EV loads, such as factories, households, and shopping malls, are also fed by this distribution network. The amount of uncontrollable loads connected to the $l$-th node is $N_l(t)$ in time slot $t$.

To alleviate the burden imposed by charging stations on the distribution networks, every charging station is equipped with a renewable energy generator, whose production rate is $U_i(t)\!\in\![0,U_{max}]$ for the $i$-th station, and the cost to use its own renewable energy is neglected in this scenario. We do not consider the initial investment cost for deploying renewable energy generators, since it is a constant. Note that it is easy to incorporate the storage degradation cost due to charging/discharging by the model in \cite{Sun2}. Since the renewable energy production is intermittent and unpredictable, a storage device, for example a battery, is equipped for every charging station. It can also cut down the electricity cost since a battery as a buffer can store energy when electricity price is low and release energy for EVs' charging when price is high, to reduce the cost. The energy level of the storage device for the $i$-th charging station is $B_i(t)$, bounded by $[0,B_{i,max}]$, and $B_{i,max}\!\in\! (0,B_{max}]$. In the following, we refer the battery as the storage device in each station except for other specification. Given that the SOC of each battery is upper bounded by the battery's capacity, not all amount of renewable energy $U_i(t)$ in time slot $t$ is charged into it.

For every charging outlet $j$ in the $i$-th charging station, the charging rate $r_{ij}\!(t\!)$ is also constrained by $r_{ij}(t)\!\in\![0,r_{i,max}]$ and $r_{i,max}\in(0,r_{max}]$. The PEV charging load is firstly supplied by harvested energy from renewable energy generator. If there is still redundant energy, it is then charged into the battery\footnote{If there is still superfluous energy after charging the storage, the renewable generator will self-regulate its generation rate. For example, the output power of a solar panel can be changed by maximum power point tracking method. If the renewable energy generator is connected to the power grid, the redundant energy will be input to the grid. However, in this paper we assume that the local renewable energy generator is run in off-grid mode.}. Let $R_{i}\!\left(t\!\right)$ be the current charging rate to the battery and thus $R_{i}\!\left( t\!\right)\!\in\!\left[0,\min\!\left\{R_{i}^{\max},\!\left(U_{i}\!\left(t\!\right)\!-\!\sum\limits_{j=1}^{J_{i}}\!r_{ij}\!\left( t\!\right)\!\right)^{+}\!\right\}\!\right]$, where $\!\left( x\!\right) ^{+}\!\hat{=}\!\max\!\left\{ 0,x\!\right\}$. Otherwise an amount of $\left(\sum\limits_{j=1}^{J_{i}}r_{ij}\left(t\right)-U_{i}\left(t\right)\right)^{+}$ load will be served either by the battery or by the distribution networks depending on the energy availability in battery and usage cost of power grid. Let $D_{i}^{d}\left(t\right)\!\in\!\left[0,D^{d,\max}\!\right] $ denote the amount of energy withdrawn from the distribution network and $\!\left(\sum\limits_{j=1}^{J_{i}}r_{ij}\!\left(t\!\right)\!-\!U_{i}\!\left(t\!\right)\!\right)^{+}\!\!-\!\!D_{i}^{d}\!\left(t\!\right)$ the amount of energy supplied by the battery. To ensure that the distribution network can sustain the charging load independently, we assume $D^{d,\max}\!>\!r_{i,\max }$. Thus the total loads imposed on the $l$-th node is $\sum_{i:X_{li}=1}\!D_i^d(t)\!+\!N_l(t)$.

If no overload is permitted, $\sum\nolimits_{i:X_{li}=1}\!D_i^d(t)\!+\!N_l(t)\!\leq\!P_l$ should stand for every time slot $t$. Since the uncontrollable loads are time varying and each node always allows slight overload for the safety reason \cite{Ardakanian}, the above inequality can be slacked to the ``chance constraint'':
\vspace{-0.5cm}
\begin{equation}\label{eq_1}
\begin{small}
Pr[\sum\nolimits_{i:X_{li}=1}\!D_i^d(t)\!+\!N_l(t)\!>\!P_l]\!\leq\!\epsilon,
\end{small}
\end{equation}where $\epsilon$ is chosen to ensure that the node is safe even with a little overload.

At the end of this subsection, let $c(t)$ denote the power price in time slot $t$ in the system, and it is also bounded by $[C_{min},C_{max}]$.
\vspace{-0.5cm}
\subsection{PEVs' Request and Guidance}
In the area supplied by this distribution network, there are numerous entry points for PEVs to join, which can be considered as fixed data uploading workstations on the roadside, and the number of such entry points is $K$. Here we just use the entry point to describe the parallel process of multiple PEVs' request. PEVs randomly send charging requests to an idle entry point, and the variable $\xi_k(t)$ denotes whether there is a PEV carrying an amount of energy demand of $e_k(t)$, sending request to the $k$-th ($k\!\in\!\textbf{K}\!=\!\{1,2,\cdots,K\}$) entry point. If there is a PEV uploading request to the $k$-th entry point, let $\xi_k(t)\!=\!1$, otherwise $\xi_k(t)\!=\!0$. Therefore the energy demand arrives at the $k$-th entry point is $E_k(t)\!=\!e_k(t)\xi_k(t)\!\leq\!E_{max}\!<\!\infty$. Notably, the probability of PEVs sending request to the $k$-th entry point is denoted by $p_k$, which means $Pr[\xi_k(t)=1]=p_k$.

The proposed directing policy in Section \uppercase\expandafter{\romannumeral4} decides which charging outlet the PEVs arriving at each entry point should be guided to. Hence the binary variable $W_{k,ij}(t)\!\in\!\{0,1\}$ is used to denote the connections between the $k$-th entry point and the $j$-th outlet of the $i$-th charging station, where $W_{k,ij}(t)\!=\!1$ means the PEVs sending request to the $k$-th entry point shall be guided to the $j$-th outlet in the $i$-th charging station. It is reasonable that each charging outlet can only serve one entry point and only one outlet can be connected to each entry point simultaneously. Therefore, $W_{k,ij}(t)$ satisfies
\vspace{-0.5cm}
\begin{equation}\label{eq_2}
\begin{small}
\sum\nolimits_{i=1}^{I}\sum\nolimits_{j=1}^{J_i}\!W_{k,ij}(t)\!\leq\!1 ~~(\forall k\!\in\!\textbf{K}),
\end{small}
\end{equation}
\vspace{-1.0cm}
\begin{equation}\label{eq_3}
\begin{small}
\sum\nolimits_{k=1}^K\!W_{k,ij}(t)\!\leq\!1 ~~(\forall i\!\in\!\textbf{I}, j\!\in\!\textbf{J}_i).
\end{small}
\end{equation}

The dynamic operation of this system is stated here. At the beginning of every time slot, the central controller collects the energy demand at each entry point and some information of each charging station before broadcasting the guidance information. At the same time, every node computes the overload indication according to current PEV loads and uncontrollable loads imposed on it. Finally, each charging station computes other control variables $D_i^d(t)$ and $R_{i}(t)$ in a distributed way based on the local information and the overload indication.
\vspace{-0.5cm}
\subsection{Battery Dynamics and Demand Queue}
For each charging station, its battery's energy state satisfies the following iterating equation:
\vspace{-0.3cm}
\begin{equation}\label{eq_4}\begin{small}
B_{i}\!\left(t+1\!\right)\!=\!B_{i}(t)\!-\!\eta^{-}\!\left(\!\left(\!\sum\limits_{j=1}^{J_{i}}\!r_{ij}(t)\!-\!U_{i}
(t)\!\right)^{+}\!-\!D_{i}^{d}(t)\!\right)\!+\!\eta^{+}\!R_{i}(t),
\end{small}\end{equation}where $0\!<\!\eta^{+}\!\leq\!1$ and $\eta ^{-}\!\geq\!1$ represent the charging and discharging efficiency respectively, and $\eta ^{-}\left(\!\left(\sum\limits_{j=1}^{J_{i}}r_{ij}(t)\!-\!U_{i}(t)\!\right)^{+}\!-\!D_{i}^{d}(t)\!\right)$ denotes the amount of energy that the $i$-th charging station extracts from its battery. The following inequality stands for every battery:
\vspace{-0.3cm}\begin{equation}\label{eq_5}\begin{small}
0\!\leq\!\eta ^{-}\!\left(\!\left(\!\sum\limits_{j=1}^{J_{i}}r_{ij}\left( t\right)\!-\!U_{i}\left( t\right)\!\right) ^{+}\!-\!D_{i}^{d}\left( t\right) \right)\!\leq\!B_i(t)\!\leq\!B_{i,max},
\end{small}\end{equation}which means charging outlets in the $i$-th charging station cannot obtain amounts of energy from the battery more than the battery's residual energy and the battery also has its capacity constraint. The algorithm given in Section \uppercase\expandafter{\romannumeral4} can guarantee it perfectly.

Since PEVs' charging request is random, the following virtual demand queues are introduced to measure the relationship between charging request and energy supply. The stability of demand queues, defined in Section \uppercase\expandafter{\romannumeral3}, can be reached by regulating charging rate at every outlet and directing policy in Section \uppercase\expandafter{\romannumeral4} to realize the supply-demand balance. The stability of demand queue means that the time-average charging requests can be satisfied. For every outlet, the demand queue is denoted by:
\vspace{-0.3cm}
\begin{equation}\label{eq_6}
\begin{small}
Q_{ij}(t+1)=[Q_{ij}(t)-r_{ij}(t)+E_k(t)W_{k,ij}(t)1_{\{Q_{ij}(t)=0\}}]^+,
\end{small}
\end{equation}
where $[a]^+$ stands for $\max\{a,0\}$. The expression $1_{\{Q_{ij}(t)=0\}}$ here denotes that only if the demand queue is empty at the beginning of time slot $t$, can any new PEV's demand be directed to this charging outlet.
\vspace{-0.3cm}
\section{Problem Formulation}
The goal of this paper is to minimize the time-average expected cost from the power grid with constraints of battery and stable demand queues, as well as the safety of the distribution networks. The problem can be formulated as a stochastic optimization problem.
\vspace{-0.3cm}\begin{align}
\textbf{P1:}~~&\min~\lim _{\tau\rightarrow\infty}\frac{1}{\tau}\sum_{t=0}^{\tau-1}\sum_{i=1}^I\textbf{E}\{c(t)D_i^d(t)\} \\
&\mbox{s.t.}\,\,\text{stability of}~~ Q_{ij}(t)~~(\forall i\in \textbf{I},\forall j\in \textbf{J}_i), (1), (2), (3), (4), (5), (6) \notag
\end{align}where the expectation above is with respect to the potential randomness of the control policy. The stability of these queues are defined in Definition 1.
\begin{Definition}\label{def_1}
A discrete queue evolving like $Q(t+1)=[Q(t)-S(t)]^+ +D(t)$ is \emph{strongly stable} if:
\vspace{-0.3cm}
\[\lim_{t\rightarrow \infty}\sup \frac{1}{t}\sum_{\tau =0}^{t-1}\textbf{E}\{Q(\tau)\}<\infty.\]
\end{Definition}

If demand queues including the battery dynamics iteration (4) are \emph{strongly stable}, the system is stable. According to Neely's monograph \cite{Neely2}, a \emph{strongly stable} queue must be \emph{rate stable} for finite service $S(t)$ and stochastic arrival $D(t)$. Furthermore, if and only if $d\leq s$ holds, $Q(t)$ is \emph{rate stable}, where $d=\lim_{t\rightarrow \infty }\frac{1}{t}\sum_{\tau =0}^{t-1}\textbf{E}\{D(\tau)\}$ and $s=\lim_{t\rightarrow \infty }\frac{1}{t}\sum_{\tau =0}^{t-1}\textbf{E}\{S(\tau)\}$.

There is time coupling difficulty in solving P1 due to (4), that is to say the current charging and discharging decisions will affect the energy level in battery in the future. To tackle this difficulty, a relaxed problem will be studied:
\vspace{-0.5cm}\begin{align*}
\widetilde\textbf{P1}\textbf{:}~&\min~~
\lim_{\tau\rightarrow\infty}\frac{1}{\tau}\sum_{t=0}^{\tau-1}\!\sum_{i=1}^I\!\textbf{E}\{c(t)D_i^d(t)\} \\
\mbox{s.t.}\,\,&\text{stability of}~Q_{ij}(t)~(\forall i\in \textbf{I},\forall j\in \textbf{J}_i), (1), (2), (3), (6),\\
&\lim_{\tau \rightarrow \infty }\frac{1}{\tau }\sum\limits_{t=0}^{\tau -1}\!\mathbf{E}\!\left\{ \eta ^{-}\!\left( \!\left(\!\sum\limits_{j=1}^{J_{i}}r_{ij}\left( t\right)\!-\!U_{i}\left( t\right)\!\right)^{+}\!-\!D_{i}^{d}\left( t\right)\!\right)\!\right\}\!\geq \!\lim_{\tau \rightarrow\infty}
\frac{1}{\tau}\!\sum\limits_{t=0}^{\tau-1}\!\mathbf{E}\!\left\{ \eta^{+}R_{i}\left(t\right)\!\right\}.\end{align*}

Denote the optimal value of \textbf{P1} as $P^*$ and $\widetilde \textbf{P1}$ as $P_{re}^*$. Notice that any feasible solution of \textbf{P1} is also a feasible solution of $\widetilde \textbf{P1}$ since $\widetilde \textbf{P1}$ is less constrained than \textbf{P1}. Therefore, $P_{re}^*\leq P^*$. We will solve $\widetilde \textbf{P1}$ first. Later we will prove the performance gap between the proposed algorithm and the optimal solution to \textbf{P1}. More interestingly, the proposed algorithm for $\widetilde\textbf{P1}$ also satisfies the constraints of (\ref{eq_5}).
\vspace{-0.3cm}
\section{Problem Decomposition and Solution}
One challenge in solving $\widetilde \textbf{P1}$ is due to the fact that the supply, demand and price are all stochastic processes. If all statistical information is known, it can be solved offline. However, the solution to this problem should be achieved by real-time implementations.
Firstly, some transformation of constraint (1) shall be done to write it into a deterministic form.
\vspace{-0.5cm}
\subsection{Transformation of Chance Constraint}
Assume the distribution of uncontrollable load $N_l(t)$ is symmetric about its mean. Then there is
\vspace{-0.3cm}
\begin{equation}\label{eq_8}\begin{split}
Pr[N_l(t)\!\!\geq\!\! P_l\!\!-\!\!\!\!\!\!\displaystyle \sum_{i:X_{li}=1}\!\!\!\!D_i^d(t)]\!\!
&=\!\!\frac{1}{2}Pr[|N_l(t)\!\!-\!\!\textbf{E}\{\!N_l(t)\!\}|\!\!\geq\!\! P_l\!\!-\!\!\!\!\!\sum_{i:X_{li}=1}\!\!\!\!D_i^d(t)\!\!-\!\!\textbf{E}\{N_l(t)\}]\\
&\leq \!\frac{1}{2} \frac{\sigma_{N_l}^2}{[P_l-\!\!\displaystyle\sum_{i:X_{li}=1}D_i^d(t)\!-\!\textbf{E}\{N_l(t)\}]^2}\leq \epsilon,
\end{split}\end{equation}where $\sigma_{N_l}$ is the standard deviation of $N_l(t)$. As in \cite{Tarasak}, Chebyshev's inequality is given below:
\vspace{-0.3cm}
\begin{equation*}
\begin{small}
Pr[|x|\geq a]\leq{\sigma_x^2}/{a^2}
\end{small}
\end{equation*}
for a zero-mean random variable $x$, and $\sigma_x$ is the standard deviation of $x$. Therefore, the constraint in (\ref{eq_8}) is transferred to
\vspace{-0.5cm}
\begin{equation}\label{eq_9}
\begin{small}
P_l\!-\!\displaystyle\sum_{i:X_{li}=1}\!D_i^d(t)\!-\!\textbf{E}\!\{N_l(t)\!\}\!\geq\!\frac{\sigma_{N_l}}{\sqrt{2\epsilon}}.
\end{small}
\end{equation}
\vspace{-1.5cm}
\subsection{Real-time Decomposition}
In order to develop algorithm to regulate the SOC of each battery to satisfy (\ref{eq_5}), the shifted version of $B_i(t)$, $H_i(t)\!=\!B_i(t)\!-\!T_{max}\!-\!\eta^-\!J_ir_{i,max}$ is given, where $T_{max}$ is a constant which will be defined in Section \uppercase\expandafter{\romannumeral5}. Like $B_i(t)$, $H_i(t)$ also satisfies
\vspace{-0.5cm}
\begin{equation}\label{eq_10}
\begin{small}
H_{i}\!\left( t+1\!\right)\!=\!H_{i}\left( t\right)\!-\!\eta ^{-}\!\left(\!\left(\!\sum\limits_{j=1}^{J_{i}}r_{ij}\left( t\right)\!-\!U_{i}\left( t\right)\!\right)^{+}\!-\!D_{i}^{d}\left( t\right)\!\right)\!+\!\eta ^{+}R_{i}\left( t\right).
\end{small}\end{equation}
~~The technique for shifted queue is also utilized in \cite{Guo}. Then the state of system queues is described as $\overrightarrow{\theta(t)}=(Q_{11}(t),...,Q_{IJ_I}(t),H_1(t),...,H_I(t))$.

Define Lyapunov function as:
\vspace{-0.5cm}
\begin{equation}\label{eq_11}
\begin{small}
\mathcal{L}(\overrightarrow{\theta(t)})=\frac{1}{2}\sum_{i=1}^I\sum_{j=1}^{J_i}Q_{ij}^2(t)+\frac{1}{2}\sum_{i=1}^IH_i^2(t).
\end{small}
\end{equation}

The Lyapunov drift is given by
\vspace{-0.5cm}
\begin{equation}\label{eq_12}
\begin{small}
\triangle(\overrightarrow{\theta(t)})=\textbf{E}\{\mathcal{L}(\overrightarrow{\theta(t+1)})-\mathcal{L}(\overrightarrow{\theta(t)})| \overrightarrow{\theta(t)}\}.
\end{small}
\end{equation}

From (\ref{eq_6}) and (\ref{eq_10}), it is easy to see that
\vspace{-0.5cm}
\begin{equation} \label{eq_13}
\begin{small}
 \frac{1}{2}(Q_{ij}^2(t+1)\!-\!Q_{ij}^2(t))\!\leq\!\alpha_{\max}\!-\!Q_{ij}(t)r_{ij}(t)
 \!-\!r_{ij}(t)\!\sum\nolimits_{k=1}^{K}\!E_k(t)W_{k,ij}(t)1_{\{Q_{ij}(t)=0\}},
 \end{small}\end{equation}
\vspace{-1cm}
\begin{equation} \label{eq_14}
\begin{small}
\frac{1}{2}(H_{i}(t+1)\!-\!H_{i}^2(t))\!\leq\!\beta_{\max}\!-\!H_{i}(t)\!\left(\eta^{-}\!\left(\!\left(\!\sum\limits_{j=1}^{J_{i}}r_{ij}(t)\!-\!U_{i}( t)\right)^{+}\!-\!D_{i}^{d}(t)\!\right)\!+\!\eta ^{+}R_{i}(t)\right),
\end{small}
\end{equation}
where $\alpha_{\max}\!\!=\!\!\frac{1}{2}r_{max}^2\!\! +\!\! \frac{1}{2}E_{max}^2\!\! <\!\! \infty$ and $\beta _{\max }=\frac{1}{2}\max \left\{ \left( \eta ^{+}U_{\max }\right)^{2},\max_{i}\left\{ J_{i}^{2}r_{i,\max }^{2}\right\} \right\}\!\!<\!\!\infty$ are finite constants.

Substitute (\ref{eq_13}) and (\ref{eq_14}) into (\ref{eq_12}). We consider the following drift plus penalty expression as in \cite{Jin}. The Lyapunov drift is used to measure stability and the penalty item accounts for system performance. The tradeoff between system stability and optimal performance is achieved by the parameter $V$.
\vspace{-0.3cm}
\begin{equation} \label{eq_15}
\begin{split}
&\triangle(\overrightarrow{\theta(t)})\!+\!V\textbf{E}\{c(t)\sum_{i=1}^{I}D_i^d(t)|\overrightarrow{\theta(t)}\}
\!\leq\!\delta_{\max}\!-\!\sum\limits_{i=1}^{I}\!\mathbf{E}\!\left\{\sum\limits_{j=1}^{J_{i}}r_{ij}\left( t\right)\!\left( Q_{ij}\left( t\right)
\!+\!H_{i}\left( t\right)\eta ^{-}\!\right)\!\cdot\!1_{\left\{ Q_{ij}\left(
t\right)>0\right\} }\left\vert \overrightarrow{\theta \left( t\right) }%
\right.\!\right\}\\
&\!-\!\sum\limits_{i=1}^{I}\mathbf{E}\left\{
\sum\limits_{j=1}^{J_{i}}r_{ij}\left( t\right) \left(
\sum\limits_{k=1}^{K}E_{k}\left( t\right) W_{k,ij}\left( t\right)
\!+\!H_{i}\left( t\right) \eta ^{-}\right) \cdot 1_{\left\{ Q_{ij}\left(
t\right) =0 \sum\limits_{j=1}^{J_{i}}r_{ij}\left( t\right) \geq
U_{i}\left( t\right) \right\} }\left\vert \overrightarrow{\theta \left(
t\right) }\right. \right\}\\
&\!+\!\sum\limits_{i=1}^{I}\!\mathbf{E}\!\left\{ H_{i}\left( t\right) R_{i}\left(
t\right) \eta ^{+}\!\cdot\!1_{\left\{ \sum\limits_{j=1}^{J_{i}}r_{ij}\left(
t\right)\!<U_{i}\left( t\right)\!\right\} }\!\left\vert\! \overrightarrow{\theta
\left( t\right) }\right. \right\}\!+\!\sum\limits_{i=1}^{I}\mathbf{E}%
\!\left\{ H_{i}\left( t\right) \eta ^{-}U_{i}\left( t\right)\!\cdot\!1_{\left\{
\sum\limits_{j=1}^{J_{i}}r_{ij}\left( t\right) \geq U_{i}\left( t\right)
\right\} }\left\vert \overrightarrow{\theta \left( t\right) }\right.
\right\}\\
&+\sum\limits_{i=1}^{I}\mathbf{E}\left\{
D_{i}^{d}\left( t\right) \left( H_{i}\left( t\right) \eta ^{-}\!+\!Vc\left(
t\right) \right) \cdot 1_{\left\{ \sum\limits_{j=1}^{J_{i}}r_{ij}\left(
t\right) \geq U_{i}\left( t\right) \right\} }\left\vert \overrightarrow{%
\theta \left( t\right) }\right. \right\},
\end{split}\end{equation}where $\delta_{max} \!\!= \!\!\sum_{i=1}^{I}J_i\alpha_{max}\!\! +\!\! I\beta_{max}\!\! <\!\! \infty$, and $V$ is an important parameter in regulating the trade-off between system cost and storage capacity. By adopting the framework of Lyapunov optimization, two sub-problems only relying on current information are derived by minimizing the right hand side of (\ref{eq_15}) approximately subject to the constraints in $\widetilde\textbf{P1}$. Later it is proven that the proposed algorithm can satisfy the constraints of (\ref{eq_5}). Notice that (\ref{eq_1}) has been replaced by (\ref{eq_9}).
\vspace{-0.5cm}
\begin{align}\label{eq_16}
&\textbf{P2:}~~\text{If}~~Q_{ij}(t)\!>\!0 \notag\\
&\sum\limits_{i=1}^{I}\!\left\{\!\sum\limits_{j=1}^{J_{i}}r_{ij}\left(
t\right) \left( Q_{ij}\left( t\right)\!+\!H_{i}\left( t\right) \eta ^{-}\right)
\!\cdot\!1_{\left\{ Q_{ij}\left( t\right)>0\right\}}\!+\!\sum\limits_{j=1}^{J_{i}}r_{ij}\left( t\right)\!\left(
\!\sum\limits_{k=1}^{K}\!E_{k}\left( t\right) W_{k,ij}\left( t\right)
\!+\!H_{i}\left( t\right) \eta ^{-}\right)\!\cdot\!1_{\left\{ Q_{ij}\left(
t\right) =0\right\}}\right.\notag\\
&\left.\!-\!D_{i}^{d}\left( t\right) \left( H_{i}\left( t\right)
\eta ^{-}\!+\!Vc\left( t\right) \right)\!-\!H_{i}\left( t\right) R_{i}\left(
t\right) \eta ^{+}\right\} \\
&\mbox{s.t.}\,\,(9), r_{ij}\left( t\right)\!\in\!\left[ 0,r_{i,\max }\right], R_{i}\left(
t\right)\!\in\!\left[ 0,\min\!\left\{ R_{i}^{\max },\!\left( U_{i}\left( t\right)
\!-\!\sum\limits_{j=1}^{J_{i}}\!r_{ij}\left( t\right)\!\right) ^{+}\!\right\}\!\right], \forall i\!\in\!\mathbf{I}.\notag
\end{align}

Another subproblem can be extracted from \textbf{P2}, to decide the connections between PEV requests at different entry points and charging outlets, for $\forall Q_{ij}(t)=0$, shown as below:
\vspace{-0.5cm}
\begin{align}\label{eq_17}
\textbf{P3:}~~&\max_{W_{k,ij}(t)}~~\sum\limits_{k=1}^{K}E_{k}\left( t\right) W_{k,ij}\left( t\right)
+H_{i}\left( t\right) \eta ^{-}\\
&\mbox{s.t.}\,\,(2), (3) \notag
\end{align}which means that the algorithm should meet the largest amount of demand with the empty charging outlet whose battery's residual energy is the most.

Notice that there is variable coupling due to the constraint (\ref{eq_9}) in \textbf{P2}. To decouple this constraint, Lagrange dual method will be employed in the next subsection.
\vspace{-0.5cm}
\subsection{Decoupling of Primal Problems}
Considering Lagrangian relaxation for \textbf{P2}, two functions are defined in the case of $Q_{ij}(t)>0$ and $Q_{ij}(t)=0$ respectively:
\vspace{-0.3cm}
\[
\begin{split}
&F_t^1(\overrightarrow{r}\!,\!\overrightarrow{D^d}\!,\!\overrightarrow{R}\!,\!\overrightarrow{\lambda})\!\!=\!\!\sum_{i=1}^I\{\sum_{j=1}^{J_i}\!r_{ij}(t)[Q_{ij}(t)\!\!+\!\!H_i(t)]\!-\!\!
D_i^d(t)(H_i(t)\eta^-\!\!+\!\!Vc(t))\!\!-\!\!H_i(t)R_{i}(t)\eta^+\}\!\\&
+\!\!\sum_{l=1}^L\lambda_l(P_l\!-\!\!\!\!\!\!\sum_{i:X_{li}=1}\!\!\!\!D_i^d(t)\!\!-\!\!\textbf{E}\{N_l(t)\}\!\!-\!\!\frac{\sigma_{N_l}}{\sqrt{2\epsilon}}),
\end{split}
\]
\vspace{-0.5cm}
\[
\begin{split}
&F_t^2(\!\overrightarrow{r}\!,\!\overrightarrow{D^d}\!,\!\overrightarrow{R}\!,\!\overrightarrow{\lambda})\!\!=\!\!\sum_{i=1}^I\!\{\sum_{j=1}^{J_i}\!r_{ij}(t)[\sum_{k=1}^K E_k(t)W_{k,ij}(t)\!\!+\!\!H_i(t)\eta^-]\!\!\\&
\!\!-\!\!(D_i^d(t)\!\!+\!\!D_i^b(t))(H_i(t)\eta^-\!\!+\!\!Vc(t))\!\!-\!\!H_i(t)R_{i}(t)\eta^+\}\!\!+\!\!
\sum_{l=1}^L\lambda_l(P_l\!\!-\!\!\sum_{i:X_{li}=1}D_i^d(t)\!\!-\!\!\textbf{E}\{N_l(t)\}\!\!-\!\!\frac{\sigma_{N_l}}{\sqrt{2\epsilon}}),
\end{split}
\]where $\overrightarrow{r}=(r_{11}(t),...,r_{IJ_i}(t))$, $\overrightarrow{D^d}=(D_1^d(t),...,D_I^d(t))$, $\overrightarrow{R}=(R_1(t),...,R_I(t))$, and $\overrightarrow{\lambda}=(\lambda_1,...,\lambda_L)$.

The dual functions $\Gamma_t^1(\overrightarrow{\lambda})$ and $\Gamma_t^2(\overrightarrow{\lambda})$ are defined as the partial maximum of $F_t^1(\overrightarrow{r},\overrightarrow{D^d},\overrightarrow{R},\overrightarrow{\lambda})$ and $F_t^2(\overrightarrow{r},\overrightarrow{D^d},\overrightarrow{R},\overrightarrow{\lambda})$ with respect to $\overrightarrow{r}$, $\overrightarrow{D^d}$, $\overrightarrow{R}$:
\vspace{-0.5cm}
\[
\Gamma_t^1(\overrightarrow{\lambda})=\max_{\overrightarrow{r},\overrightarrow{D^d},\overrightarrow{R}} F_t^1(\overrightarrow{r},\overrightarrow{D^d},\overrightarrow{R},\overrightarrow{\lambda}),~\Gamma_t^2(\overrightarrow{\lambda})=\max_{\overrightarrow{r},\overrightarrow{D^d},\overrightarrow{R}} F_t^2(\overrightarrow{r},\overrightarrow{D^d},\overrightarrow{R},\overrightarrow{\lambda})
\]
\vspace{-1.5cm}
\begin{tabbing}
\hspace*{20bp}
s.t.~~\bigskip $D_{i}^{d}\left( t\right)\!\!\in\!\!\left[ 0,\left(
\sum\limits_{j=1}^{J_{i}}r_{ij}\left( t\right)\!-\!U_{i}\left( t\right) \right)
^{+}\!\right], R_{i}\left( t\right)\!\!\in\!\!\left[0,\min\!\left\{ R_{i}^{\max },\!\left(
U_{i}\left( t\right)\!-\!\sum\limits_{j=1}^{J_{i}}r_{ij}\left( t\right) \right)^{+}\right\}\!\right], (\forall i\!\in\!\textbf{I},j\!\in\!\textbf{J}_i)$
\end{tabbing}

Now scrutinizing the dual functions above, it can be decomposed into subproblems for each charging outlet, renewable energy generator, and charging station. The following problems \textbf{P4(a)} and \textbf{P4(b)} are for every charging outlet to decide their charging rates. Problem \textbf{P5} is for station's private renewable energy generator to regulate the charging rate $R_{i}(t)$ to the battery. And problem \textbf{P6} is for every charging station to determine the energy they consume from the grid.

If $Q_{ij}(t)\!>\!0$,
\begin{equation}\textbf{P4(a)}:~~\max_{0\leq r_{ij}(t)\leq r_{i,max}}~~~r_{ij}(t)[Q_{ij}(t)\!+\!H_i(t)\eta^-] \label{eq_18}\end{equation}
\vspace{-0.5cm}

\vspace{-0.5cm}If $Q_{ij}(t)\!=\!0$,
\begin{equation}\textbf{P4(b)}:~~\max_{0\leq r_{ij}(t)\!\leq\!r_{i,max}}~~r_{ij}(t)[\sum_{k=1}^K\!E_k(t)W_{k,ij}(t)\!+\!H_i(t)\eta^-] \label{eq_19}\end{equation}$\sum_{k=1}^K E_k(t)W_{k,ij}(t)+H_i(t)\eta^-$ in \textbf{P4(b)} will be solved by directing policy in the next subsection, which is the solution of \textbf{P3}.
\vspace{-0.5cm}
\begin{align}
\textbf{P5:}~~&\min_{R_{i}\left( t\right) }~~\sum_{i=1}^{I}\!H_{i}\left( t\right)
R_{i}\left( t\right) \eta ^{+} \\ \label{eq_20}
&\mbox{s.t.}~~0\!\leq\!R_{i}\left( t\right)\!\leq\!\min\!\left\{ R_{i}^{\max
},U_{i}\left( t\right)\!-\!\sum\limits_{j=1}^{J_{i}}r_{ij}\left( t\right)
\!\right\} \notag
\end{align}
\vspace{-1.5cm}
\begin{align}
\textbf{P6:}~~&\min_{\!D_i^d(t)}~~\phi_t(\overrightarrow{D^d},\overrightarrow{\lambda})
\!=\!\sum_{i=1}^I\!D_i^d(t)\!\left(\!H_i(t)\eta^-+Vc(t)\!\right)\!+\!\sum_{l=1}^L\!\lambda_l
\!\left(\sum_{i:X_{li}=1}D_i^d(t)\!-\!P_l\!+\!\textbf{E}\{N_l(t)\}\!+\!\frac{\sigma_{N_l}}{\sqrt{2\epsilon}}\!\right) \\ \label{eq_21}
&\mbox{s.t.}~~D_{i}^{d}\left( t\right)\!\in\!\left[ 0,\left(
\sum\limits_{j=1}^{J_{i}}r_{ij}\left( t\right)\!-\!U_{i}\left( t\right) \right)
^{+}\right] \notag
\end{align}

Notice that \textbf{P4(a)}, \textbf{P4(b)}, and \textbf{P5} are independent of Lagrangian multipliers. Thus only the dual problem of \textbf{P6} shall be considered as follows.
\vspace{-0.5cm}
\begin{align}
\widehat \textbf{P6}\textbf{:}~~&\max_{\overrightarrow{\lambda}}~\min_{D_i^d(t)}~~\phi_t(\overrightarrow{D^d},\overrightarrow{\lambda}) \\ \label{eq_22}
&\mbox{s.t.}~~\lambda_l\geq 0~~~\forall l\in \textbf{L}, \notag
\end{align}which is equivalent to:
\vspace{-0.5cm}\begin{equation}\label{eq_23}
\begin{small}
\max_{\overrightarrow{\lambda}}\{\sum_{l=1}^L\lambda_l(\textbf{E}\{N_l(t)\}+\frac{\sigma_{N_l}}{\sqrt{2\epsilon}}-P_l)+ \min_{\begin{subarray}{1} D_i^b(t)\end{subarray}}\sum_{i=1}^I\Lambda_i(D_i^d(t),\overrightarrow{\lambda})\},
\end{small}\end{equation}where $\Lambda_i(D_i^d(t),\overrightarrow{\lambda})=D_i^d(t)(H_i(t)\eta^-+Vc(t)+\sum_{l:X_{li}=1}\lambda_l)$.
\vspace{-0.5cm}
\subsection{Directing Policy and Distributed Energy Allocation Laws}
In this subsection, the solutions of \textbf{P3}$\sim$\textbf{P6} will be given. The whole algorithm consists of directing policy and energy allocation laws. Directing policy, the solution of \textbf{P3}, decides which outlet that the PEV registered at the $k$-th entry point should be charged in. Distributed energy allocation laws, implemented by each charging station, are solutions of \textbf{P4}$\sim$\textbf{P6}.
\subsubsection{Directing Policy}
After every PEV uploads their energy demands via entry points, and every charging station uploads their battery state, the central controller implements the following Algorithm \ref{alg_1}.
\renewcommand{\algorithmicrequire}{\textbf{Input:}} 
\renewcommand{\algorithmicensure}{\textbf{Output:}} 
\vspace{-0.5cm}
\begin{algorithm}
\caption{Directing Policy}\label{alg_1}
\begin{algorithmic}
\REQUIRE $E_k(t),H_i(t)\eta^-$
\ENSURE The decision variable $W_{k,ij}(t)$
\STATE Define two sets: $\textbf{Y}\!\!=\!\!\{k|E_k(t)\!>\!0\}$ and $\textbf{Z}\!\!=\!\!\{(i,j)|Q_{ij}\!=\!0\}$ to include the nonempty entry points and empty outlets in time slot $t$ respectively.
\STATE Let all $W_{k,ij}(t)=0$.
\WHILE {$\textbf{Y}\neq \O$ \&\& $\textbf{Z}\neq \O$}
\STATE Choose the entry point having the largest demand $E_k(t)$ from \textbf{Y}.
\STATE Search for an empty outlet $(i,j)$ among \textbf{Z} to get the maximum of $E_k(t)+H_i(t)\eta^-$.
\STATE Let $W_{k,ij}(t)=1$, remove $k$ from \textbf{Y} and $(i,j)$ from \textbf{Z}.
\ENDWHILE
\end{algorithmic}
\end{algorithm}
\vspace{-0.3cm}
\subsubsection{Charging Rate of PEV}
For every charging outlet, the charging rate is determined  by solving problems \textbf{P4(a)} and \textbf{P4(b)}, case by case.

$a)$ For outlet that is non-empty ($Q_{ij}(t)>0$) at the beginning of $t$: The charging rate depends on the demand queue state and battery state.

$b)$ For outlet that is empty ($Q_{ij}(t)=0$) at the beginning of $t$: The charging rate depends on the coming demand in this time slot and also battery state.

The charging rate control law shown in Fig. 2 means to serve the demand as fast as possible when the battery has sufficient power.
\begin{figure}
\begin{minipage}[t]{0.5\textwidth}
\centering
\includegraphics[width=80mm]{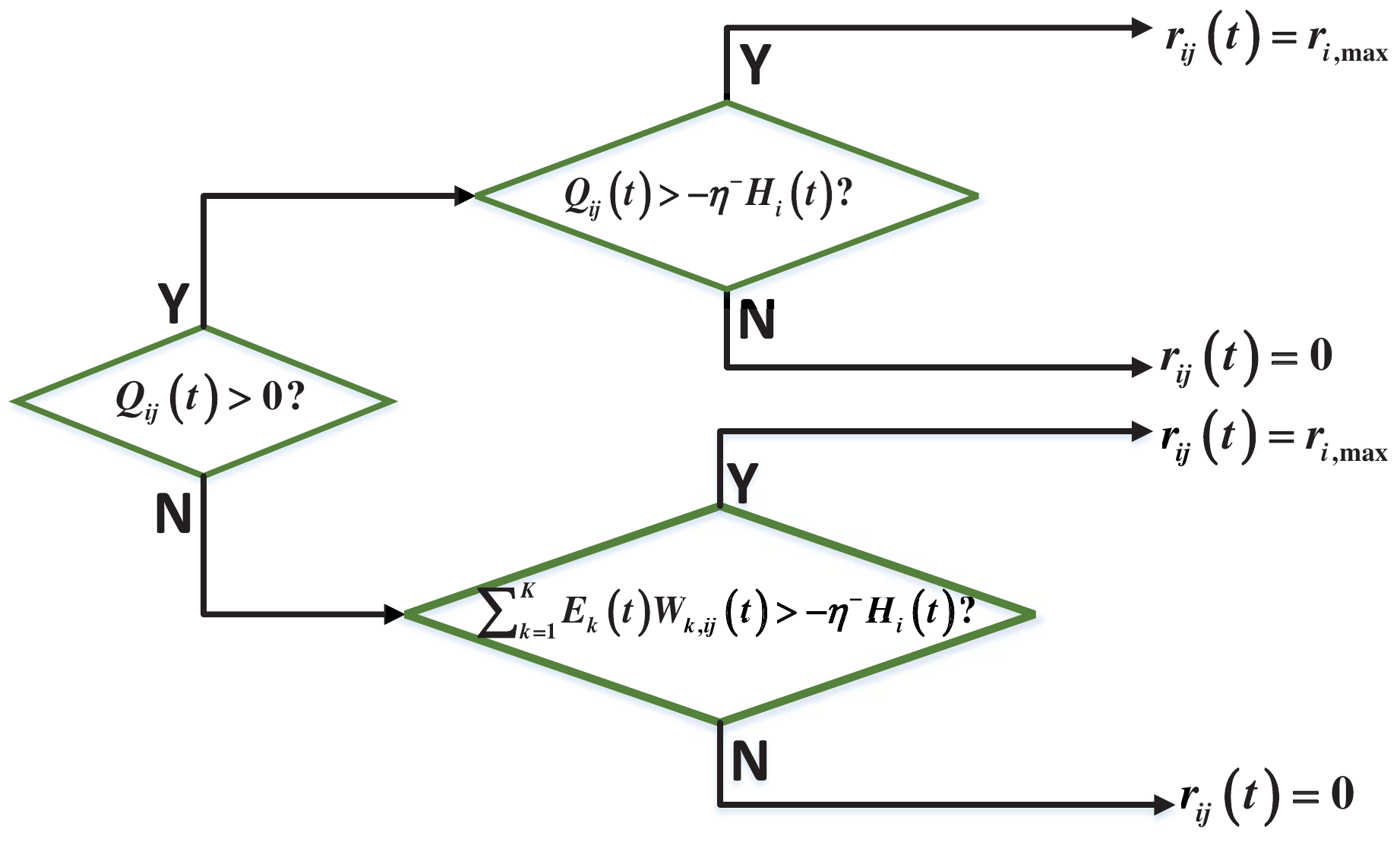}
\caption{Charging rate control law.}
\label{fig_2}
\end{minipage}%
\begin{minipage}[t]{0.5\textwidth}
\centering
\includegraphics[width=80mm]{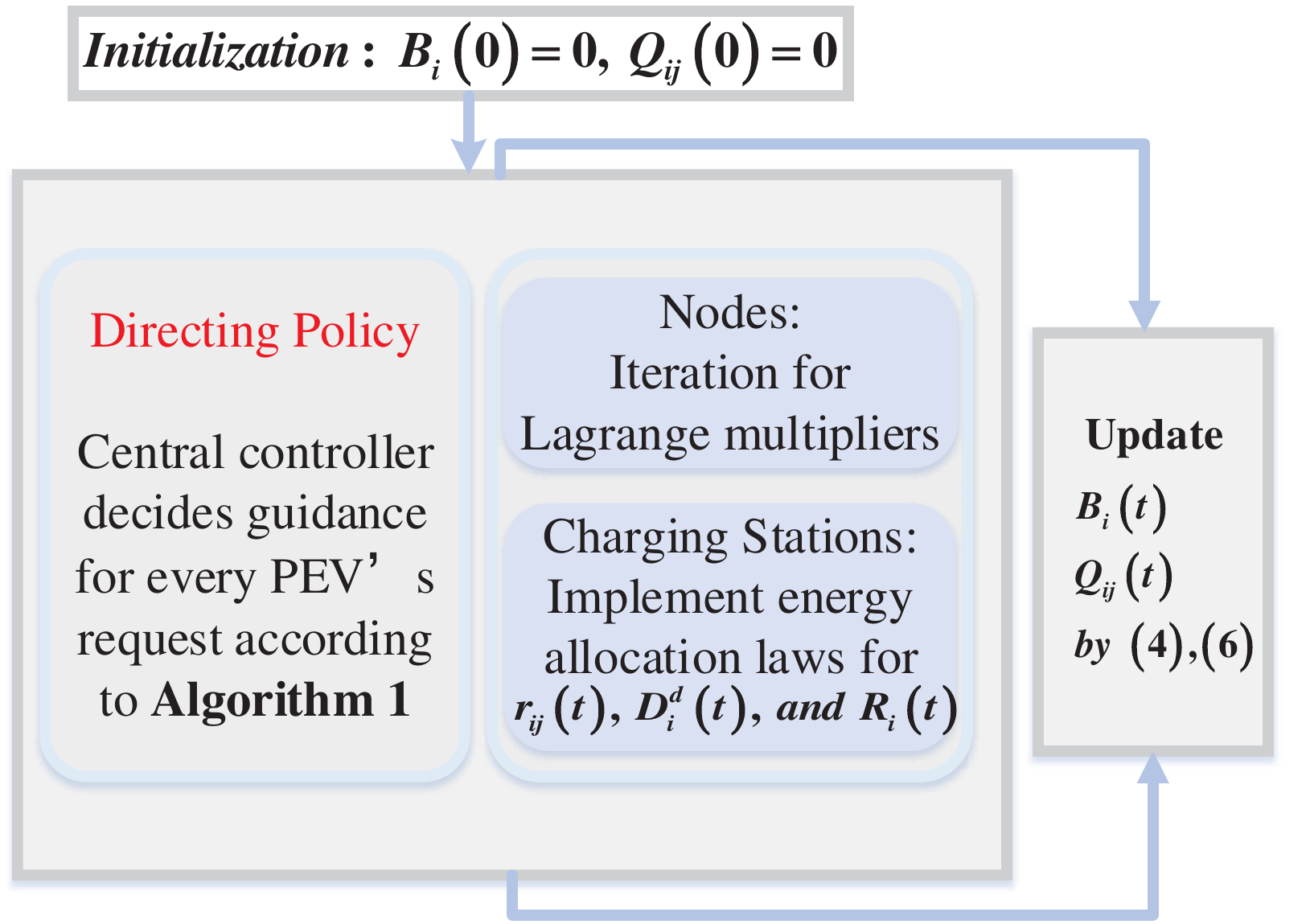}
\caption{A sketch for the whole algorithm.}
\label{fig_3}
\end{minipage}
\end{figure}
\subsubsection{Renewable energy input}
Every charging station determines the charging rate of their harvested renewable energy at time slot $t$, depending on the battery state, also in a distributed way,
\vspace{-0.3cm}
\[
R_i(t)
= \left\{
\begin{aligned}
\min\{R_{i}^{max},(U_{i}(t)-\sum_{j=1}^{J_{i}}r_{ij}(t))^+\}~~~~\\
0~~~~\\
\end{aligned}
\right.
\begin{aligned}
&\text{if}~ H_{i}(t) \leq 0 \\
&\text{otherwise}~~ ~~
\end{aligned}\]

It is intuitive that $H_i(t)<0$ means that the battery is thirsty so that all available renewable energy should be charged into it; however, when there is enough electricity in the battery, no renewable energy will be charged into, for the sake of battery constraints in (\ref{eq_5}).
\subsubsection{Consumption from the power grid}
By introducing the Lagrangian dual problem $\widehat \textbf{P6}$, each charging station can locally solve a subproblem given by
\vspace{-0.5cm}
\begin{equation}\label{eq_24}
\begin{array}{cc}
\min & \Lambda _{i}\left( D_{i}^{d}\left( t\right) ,\vec{\lambda}\right) \\
\text{s.t.} & 0\!\leq\!D_{i}^{d}\left( t\right)\!\leq\!\left(
\sum\limits_{j=1}^{J_{i}}r_{ij}\left( t\right)\!-\!U_{i}\left( t\right) \right)
^{+}.%
\end{array}
\end{equation}

The dual problem is solved by using gradient projection method, and the Lagrangian multipliers are updated as following:
\vspace{-0.5cm}
\begin{equation}\label{eq_25}
\begin{split}
&\lambda_l(n\!+\!1,t)\!=\!\left[\lambda_l(n,t)\!-\!\kappa\left(P_l\!-\!\!\!\!\sum_{i:X_{li}=1}\!\!D_i^d(n,t)-\textbf{E}\{N_l(t)\}-\frac{\sigma_{N_l}}
{\sqrt{2\epsilon}}\right)\right]^+,
\end{split}
\end{equation}
where $\kappa$ is a sufficiently small positive constant, known as the step size and $n$ is the iteration index in time slot $t$.

Since $\Lambda_i$ is a linear function of $D_i^d(t)$, the optimum of (\ref{eq_24}) depends on the monotonicity of $\Lambda_i$. Algorithm 2 states how to choose $D_i^d(t)$, where $M$ is the greatest number of iterations and $\xi$ is the convergence criterion. The following result characterizes the effects of constant step size on the convergence property.
\vspace{-0.8cm}
\begin{Lemma}
By the above iteration (\ref{eq_25}) with constant stepsize, the dual function (22) is guaranteed to
converge to a suboptimal value within a finite number of steps,
\textit{i.e., }$\varphi _{t}^{\text{*}}\left( \vec{\lambda}\right)\!-\!\varphi _{n,t}^{%
\text{best}}\left( \vec{\lambda}\left( m,t\right) \right)\!<\!\kappa
L\max_{l}\left\{ P_{l}^{2}\!+\!\left( D_{l}^{\max }\!+\!\mathbf{E}\left(
N_{l}\right)\!+\!\frac{\sigma _{N_{l}}}{\sqrt{2\epsilon }}\right) ^{2}\right\} $
within at most $\frac{\left\Vert \vec{\lambda}\left( 0,t\right)\!-\!\vec{\lambda%
}^{\ast }\left( t\right) \right\Vert _{2}^{2}}{\kappa ^{2}L\max_{l}\left\{
P_{l}^{2}\!+\!\left( D_{{}}^{\max }\!+\!\mathbf{E}\left( N_{l}\right)\!+\!\frac{\sigma
_{N_{l}}}{\sqrt{2\epsilon }}\right) ^{2}\right\} }$ steps, where $\varphi
_{t}^{\ast }\left( \vec{\lambda}\right) $ is the optimal value of (22)
and $\varphi _{n,t}^{\text{best}}\left( \vec{\lambda}\left( m,t\right)
\right) \hat{=}\max_{m\in \left\{ 1,\cdots ,n\right\} }\varphi _{m,t}\left(
\vec{\lambda}\left( m,t\right) \right) $ is the best value of (22)
till now, $\vec{\lambda}\left( 0,t\right) $ is the initial value of
Lagrangian multiplier vector and $\vec{\lambda}^{\ast }\left( t\right) $ is the
optimal value vector. The above lemma shows that the stepsize can be used to tradeoff convergence accuracy and speed.
Due to space limitation, the detailed proof can be found in \cite{Boyd}.
\end{Lemma}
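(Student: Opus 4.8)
The plan is to read iteration (\ref{eq_25}) as the \emph{projected supergradient ascent} method for the concave maximization $\widehat{\textbf{P6}}$, i.e.\ for maximizing the dual function $\varphi_t(\vec\lambda)=\min_{0\le D_i^d(t)\le(\sum_j r_{ij}(t)-U_i(t))^+}\phi_t(\overrightarrow{D^d},\vec\lambda)$ over $\lambda_l\ge 0$, and then to run the textbook constant-step subgradient argument (this is exactly the analysis referenced by \cite{Boyd}). Since $\phi_t$ is affine in $\vec\lambda$ for each fixed $\overrightarrow{D^d}$, a supergradient of $\varphi_t$ at $\vec\lambda(n,t)$ is obtained by evaluating $\partial\phi_t/\partial\lambda_l$ at any minimizer $\overrightarrow{D^d}(n,t)$, namely $s_l(n,t)=\sum_{i:X_{li}=1}D_i^d(n,t)-P_l+\textbf{E}\{N_l(t)\}+\frac{\sigma_{N_l}}{\sqrt{2\epsilon}}$; thus (\ref{eq_25}) is precisely $\vec\lambda(n{+}1,t)=\big[\vec\lambda(n,t)+\kappa\,s(n,t)\big]^+$, the Euclidean projection of the ascent step onto the nonnegative orthant.

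Next I would set up the standard one-step contraction toward the optimal multiplier $\vec\lambda^\ast(t)$. Because $\vec\lambda^\ast(t)\ge 0$ and projection onto a convex set is nonexpansive,
\[\|\vec\lambda(n{+}1,t)-\vec\lambda^\ast(t)\|_2^2\le\|\vec\lambda(n,t)+\kappa\,s(n,t)-\vec\lambda^\ast(t)\|_2^2,\]
and expanding the right side together with the supergradient inequality $s(n,t)^\top(\vec\lambda^\ast(t)-\vec\lambda(n,t))\ge\varphi_t^\ast-\varphi_{n,t}(\vec\lambda(n,t))$ gives
\[\|\vec\lambda(n{+}1,t)-\vec\lambda^\ast(t)\|_2^2\le\|\vec\lambda(n,t)-\vec\lambda^\ast(t)\|_2^2-2\kappa\big(\varphi_t^\ast-\varphi_{n,t}\big)+\kappa^2\|s(n,t)\|_2^2.\]
The supergradient is bounded uniformly: from $0\le D_i^d(t)\le D^{d,\max}$ one has $0\le\sum_{i:X_{li}=1}D_i^d(n,t)\le D_l^{\max}$, and feasibility of the chance constraint forces $\textbf{E}\{N_l(t)\}+\frac{\sigma_{N_l}}{\sqrt{2\epsilon}}\le P_l$, so each component obeys $s_l(n,t)^2\le\max\{P_l^2,(D_l^{\max}+\textbf{E}(N_l)+\frac{\sigma_{N_l}}{\sqrt{2\epsilon}})^2\}\le P_l^2+(D_l^{\max}+\textbf{E}(N_l)+\frac{\sigma_{N_l}}{\sqrt{2\epsilon}})^2$, hence $\|s(n,t)\|_2^2\le G^2$ with $G^2=L\max_l\{P_l^2+(D_l^{\max}+\textbf{E}(N_l)+\frac{\sigma_{N_l}}{\sqrt{2\epsilon}})^2\}$.

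Finally I would telescope the recursion from the initial iterate $\vec\lambda(0,t)$, drop the nonnegative term $\|\vec\lambda(n,t)-\vec\lambda^\ast(t)\|_2^2$ on the left, and divide by $2\kappa n$; using $\varphi_t^\ast-\varphi_{n,t}^{\text{best}}\le\frac1n\sum_{m=1}^n(\varphi_t^\ast-\varphi_{m,t})$ this yields
\[\varphi_t^\ast-\varphi_{n,t}^{\text{best}}\ \le\ \frac{\|\vec\lambda(0,t)-\vec\lambda^\ast(t)\|_2^2}{2\kappa n}+\frac{\kappa G^2}{2}.\]
The right side is strictly below $\kappa G^2$ as soon as $n>\|\vec\lambda(0,t)-\vec\lambda^\ast(t)\|_2^2/(\kappa^2 G^2)$, which is the asserted iteration count, and the residual floor $\kappa G^2$ is the stated suboptimality; letting $\kappa\downarrow 0$ shrinks this floor to zero while inflating the count, which is the claimed accuracy/speed tradeoff. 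The only genuinely delicate step is the uniform supergradient bound: one has to invoke feasibility of the chance constraint to pin down the sign range of $s_l$ and thereby collapse the $\max$ into the tidy sum $P_l^2+(D_l^{\max}+\textbf{E}(N_l)+\frac{\sigma_{N_l}}{\sqrt{2\epsilon}})^2$; beyond that the argument is the routine constant-step subgradient estimate.
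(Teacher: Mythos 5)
Your proposal is correct and follows exactly the argument the paper defers to: the paper gives no in-line proof but cites Boyd's constant-stepsize subgradient analysis, and your reconstruction (projected supergradient ascent on the concave dual, nonexpansiveness of the projection, the one-step recursion, the uniform bound $\|s\|_2^2\le L\max_l\{P_l^2+(D_l^{\max}+\mathbf{E}(N_l)+\frac{\sigma_{N_l}}{\sqrt{2\epsilon}})^2\}$, and telescoping to get the $\frac{R^2}{2\kappa n}+\frac{\kappa G^2}{2}$ bound) is precisely that argument, with the stated accuracy floor and iteration count falling out correctly. The only point worth flagging is that your supergradient bound implicitly assumes $\mathbf{E}\{N_l(t)\}+\frac{\sigma_{N_l}}{\sqrt{2\epsilon}}\le P_l$, i.e.\ that the uncontrollable load alone satisfies the transformed chance constraint; this is a standing feasibility assumption of the model rather than something you derive, and it is fine to invoke it as you do.
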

\begin{algorithm}
\caption{Consumption From the Power Grid}\label{alg_2}
\begin{algorithmic}[1]
\REQUIRE $H_i(t),c(t),\eta^-$
\ENSURE $D_i^d(t)$
\STATE Let all $\lambda_l=\lambda_{max}$.
\WHILE {$\left|\phi_t|_{\overrightarrow{\lambda}\!(\!n\!)}-\phi_t|_{\overrightarrow{\lambda}\!(\!n-1\!)}\right|\geq \xi ~$\&\&$~ n < M$}
\STATE Compute new values of $\overrightarrow{\lambda}$ through (24).
\IF {$\frac{\partial\Lambda_i}{\partial D_i^d(n,t)}=\frac{\partial\Lambda_i}{\partial D_i^b(n,t)} >0$}
\STATE $D_i^d(n,t)=0$
\ELSE
\STATE $D_i^d(n,t)\!\!=\!\!\left( \sum\limits_{j=1}^{J_{i}}r_{ij}\left( t\right) -U_{i}\left( t\right)\right) ^{+}$
\ENDIF
\STATE $n=n+1$
\ENDWHILE
\STATE $D_i^d(t)=D_i^d(n,t)$
\end{algorithmic}
\end{algorithm}

The whole algorithm  for the problem $\widetilde\textbf{P1}$ is given in Fig. \ref{fig_3}. At the beginning of time slot $t$, every entry point uploads the energy demand arriving at them, and every charging station uploads the states of demand queues and batteries, for the central controller to guide different PEVs to charge at different charging outlets. At the same time each node in the distribution network computes the Lagrangian multipliers as the overload indication through (\ref{eq_25}) and broadcasts them to downstream charging stations. When receiving the Lagrange multipliers, each charging station calculates the charging rate, the utilization of renewable energy, and the energy consumption from the power grid.
\vspace{-0.3cm}
\section{Performance Analysis}
Stability of battery dynamics and demand queues should be guaranteed in the above algorithms. Also the performance gap between the proposed algorithm and the optimal one in solving \textbf{P1} is given.
\vspace{-0.5cm}
\subsection{Stability of Queues}
\begin{Theorem} \label{thm_1}
All demand queues at every charging outlet are stable.
\end{Theorem}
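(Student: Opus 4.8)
The plan is to show that each $Q_{ij}(t)$ is in fact \emph{deterministically} bounded by a finite constant, which then trivially yields the time-average bound required by Definition \ref{def_1}. The structural feature to exploit is the indicator $1_{\{Q_{ij}(t)=0\}}$ in the queue recursion (\ref{eq_6}): by construction of the directing policy (Algorithm \ref{alg_1}, where the candidate set is $\mathbf{Z}=\{(i,j):Q_{ij}(t)=0\}$) and constraints (\ref{eq_2})--(\ref{eq_3}), fresh demand is routed to outlet $(i,j)$ only in slots in which its queue is already empty, and then the amount added is at most a single $E_k(t)\le E_{\max}<\infty$.

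First I would split the recursion (\ref{eq_6}) into two cases. If $Q_{ij}(t)=0$, then $Q_{ij}(t+1)=[-r_{ij}(t)+E_k(t)W_{k,ij}(t)1_{\{Q_{ij}(t)=0\}}]^{+}\le E_k(t)\le E_{\max}$, using $r_{ij}(t)\ge 0$, $W_{k,ij}(t)\in\{0,1\}$, and the uniform arrival bound from Section \uppercase\expandafter{\romannumeral2}. If $Q_{ij}(t)>0$, the indicator vanishes, so $Q_{ij}(t+1)=[Q_{ij}(t)-r_{ij}(t)]^{+}\le Q_{ij}(t)$; the queue is nonincreasing in such slots. A straightforward induction on $t$ combining the two cases then gives $Q_{ij}(t)\le\max\{Q_{ij}(0),E_{\max}\}\,\hat{=}\,Q_{ij}^{\max}<\infty$ for all $t$: a positive queue cannot grow, and an empty queue jumps up to at most $E_{\max}$. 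Taking the (empty or otherwise finite) initial condition, the bound is simply $E_{\max}$.

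Finally, since $0\le Q_{ij}(t)\le Q_{ij}^{\max}$ deterministically, we have $\mathbf{E}\{Q_{ij}(\tau)\}\le Q_{ij}^{\max}$ for every $\tau$, hence $\limsup_{t\to\infty}\frac{1}{t}\sum_{\tau=0}^{t-1}\mathbf{E}\{Q_{ij}(\tau)\}\le Q_{ij}^{\max}<\infty$, which is precisely strong stability in the sense of Definition \ref{def_1}; strong stability of each queue implies rate stability as noted via \cite{Neely2}. There is no real analytical obstacle: the only point needing care is the bookkeeping around the indicator, namely confirming that no slot ever adds demand to an outlet with $Q_{ij}(t)>0$ and that (\ref{eq_2})--(\ref{eq_3}) prevent $\sum_{k}E_k(t)W_{k,ij}(t)$ from being a sum of several $E_k$'s rather than a single term. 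Once that is settled, the boundedness argument and the theorem follow immediately.
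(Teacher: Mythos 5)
Your proposal is correct and follows essentially the same route as the paper, which simply observes that since each outlet serves one EV at a time the queue satisfies $Q_{ij}(t)\le E_{\max}$ for all $t$, whence strong stability is immediate. You have merely spelled out the induction and the role of the indicator $1_{\{Q_{ij}(t)=0\}}$ and constraints (\ref{eq_2})--(\ref{eq_3}) that the paper leaves implicit.
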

\begin{proof}
Since only one EV can be served by each outlet, $Q_{ij}(t) \leq E_{\max}$ for all time slot $t$.
\end{proof}
\begin{Remark}\label{rem_1}
This inequality guarantees the worst delay for every PEV. By employing virtual queue as \cite{Jin}, the length of worst delay can be calculated and limited easily.
\end{Remark}
\begin{Theorem}\label{thm_2}
For any parameter $V$ bounded between  $0\!<\!V\!\leq\!V_{\max}$ at any time slot $t$ , where
\vspace{-0.5cm}\begin{equation} \label{eq_26}\begin{split}&
V_{\max}\!=\!\frac{\min\left\{ B_{i,\max }\right\}\!-\!\eta ^{+}U_{\max }\!-\!\eta
^{-}Jr_{\max }\!-\!L\lambda _{\max }}{C_{\max }},
\end{split}\end{equation}$T_{\max}\!\!=\!\!VC_{\max}\!\!+\!\!L\lambda_{\max}$, $J_{\max}\!\!=\!\!\max_i\{J_i\}$,
the algorithm satisfies: $0\!\!\leq\!\!B_{i}(t)\!\leq\!B_{imax}$. Before the proof of Theorem 2, the following lemma is given first.
\end{Theorem}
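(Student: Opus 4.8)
The plan is to prove the two-sided bound $0\le B_i(t)\le B_{i,\max}$ by induction on the slot index $t$, carried out on the shifted level $H_i(t)=B_i(t)-T_{\max}-\eta^-J_i r_{i,\max}$, so that the target is equivalent to $-T_{\max}-\eta^-J_i r_{i,\max}\le H_i(t)\le B_{i,\max}-T_{\max}-\eta^-J_i r_{i,\max}$ for every $i$. The base case is immediate when each battery starts within its feasible range (e.g.\ empty). For the inductive step I would first isolate, as the announced lemma, three structural facts that read off directly from the control laws of Section~IV: \emph{(i)} a battery never charges and discharges in the same slot, since $R_i(t)>0$ forces $\sum_j r_{ij}(t)<U_i(t)$ and hence $\bigl(\sum_j r_{ij}(t)-U_i(t)\bigr)^+=0$; \emph{(ii)} renewable input is used, $R_i(t)>0$, only when $H_i(t)\le 0$, and then $\eta^+R_i(t)\le\eta^+U_{\max}$; \emph{(iii)} the battery discharge $\eta^-\bigl(\bigl(\sum_j r_{ij}(t)-U_i(t)\bigr)^+-D_i^d(t)\bigr)$ is strictly positive only when Algorithm~2 returns $D_i^d(t)=0$, which by its selection rule requires $H_i(t)\eta^-+Vc(t)+\sum_{l:X_{li}=1}\lambda_l(t)>0$; since $c(t)\le C_{\max}$, the dual iterates satisfy $\lambda_l(t)\ge 0$ with $\sum_{l:X_{li}=1}\lambda_l(t)\le L\lambda_{\max}$, and $\eta^-\ge 1$, this gives $H_i(t)>-\bigl(VC_{\max}+L\lambda_{\max}\bigr)/\eta^-\ge -T_{\max}$, while the decrement is then at most $\eta^-\sum_j r_{ij}(t)\le\eta^-J_i r_{i,\max}$.

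With the lemma available, the upper bound follows by cases. If the battery charges in slot $t$, then by \emph{(i)} it does not discharge and by \emph{(ii)} $H_i(t)\le 0$, so $H_i(t+1)=H_i(t)+\eta^+R_i(t)\le\eta^+U_{\max}$; using $0<V\le V_{\max}$ together with $T_{\max}=VC_{\max}+L\lambda_{\max}$, $Jr_{\max}\ge J_i r_{i,\max}$ and $B_{i,\max}\ge\min_j B_{j,\max}$, the definition of $V_{\max}$ in (\ref{eq_26}) yields exactly $\eta^+U_{\max}\le B_{i,\max}-T_{\max}-\eta^-J_i r_{i,\max}$, so the upper bound persists at $t+1$. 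If the battery does not charge, then $H_i(t+1)\le H_i(t)$ and the inductive hypothesis closes the case.

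For the lower bound, if the battery discharges in slot $t$ then fact \emph{(iii)} gives $H_i(t)>-T_{\max}$ and a decrement of at most $\eta^-J_i r_{i,\max}$, whence $H_i(t+1)>-T_{\max}-\eta^-J_i r_{i,\max}$; if it does not discharge (it charges or is idle) then $H_i(t+1)\ge H_i(t)$ and the inductive hypothesis again suffices. Translating back gives $0\le B_i(t)\le B_{i,\max}$ for all $t$ and $i$; combined with $D_i^d(t)\le\bigl(\sum_j r_{ij}(t)-U_i(t)\bigr)^+$ and the fact that a positive discharge forces $B_i(t)>\eta^-J_i r_{i,\max}$, this also establishes (\ref{eq_5}). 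I expect the main obstacle to be fact \emph{(iii)}: one must argue carefully from the stopping/selection behaviour of Algorithm~2 that whatever value $D_i^d(t)$ it returns, the event $D_i^d(t)<\bigl(\sum_j r_{ij}(t)-U_i(t)\bigr)^+$ occurs only when the linear coefficient $H_i(t)\eta^-+Vc(t)+\sum_{l:X_{li}=1}\lambda_l(t)$ is positive, and to secure the uniform bound $\sum_{l:X_{li}=1}\lambda_l(t)\le L\lambda_{\max}$ on the dual iterates actually used inside the slot; the remaining steps are elementary bookkeeping with the per-slot increments $\eta^+R_i(t)$ and $\eta^-\bigl(\bigl(\sum_j r_{ij}(t)-U_i(t)\bigr)^+-D_i^d(t)\bigr)$.
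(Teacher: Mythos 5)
Your proposal is correct and follows essentially the same route as the paper: an induction on $t$ supported by a preparatory lemma characterizing the control decisions at extreme battery levels (no renewable input when the battery is high, full grid supply when it is low), with the definition of $V_{\max}$ closing the upper bound and the bang-bang structure of the $D_i^d$ rule (sign of $H_i(t)\eta^-+Vc(t)+\sum_{l:X_{li}=1}\lambda_l$) closing the lower bound. Your organization by charge/discharge events on $H_i(t)$ is just the contrapositive phrasing of the paper's case split on $B_i(t)$ intervals, and your flagged concerns (the selection behaviour of Algorithm~2 and the uniform bound $\lambda_l\le\lambda_{\max}$) are exactly the points the paper also treats only briefly.
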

\begin{Lemma}
The energy management algorithm developed by solving \textbf{P5} and \textbf{%
P6 }can ensure the following conditions, 1) If $B_{i}\left( t\right)\!\!\geq
\!\!B_{max}\!\!-\!\!\eta ^{+}U_{\max },$ then $R_{i}\left( t\right)\!\!=\!\!0;$ 2) If $
B_{i}\left( t\right)\!\!\leq\!\!\eta ^{-}J_{i}r_{i\max },$ then $D_{i}^{d}\left(
t\right)\!\!=\!\!\sum_{j=1}^{J_{i}}\!\!\left( r_{ij}\!\!-\!\!U_{i}\left( t\right)\!\right) ^{+}.$
\end{Lemma}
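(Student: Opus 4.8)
The plan is to observe that both parts follow immediately from the closed-form solutions of \textbf{P5} and \textbf{P6} once the hypothesis on $B_i(t)$ is translated, through $H_i(t)=B_i(t)-T_{\max}-\eta^-J_ir_{i,\max}$, into a sign statement about $H_i(t)$. The ingredients are: (a) the renewable-input rule solving \textbf{P5}, which sets $R_i(t)=0$ whenever $H_i(t)>0$; (b) the per-station subproblem extracted from \textbf{P6} in (\ref{eq_23})--(\ref{eq_24}), minimising the affine map $\Lambda_i(D_i^d(t),\vec{\lambda})=D_i^d(t)\big(H_i(t)\eta^-+Vc(t)+\sum_{l:X_{li}=1}\lambda_l\big)$ over $D_i^d(t)\in\big[0,(\sum_j r_{ij}(t)-U_i(t))^+\big]$, whose minimiser is the upper endpoint whenever the coefficient is $\le 0$ (exactly the branching in Algorithm \ref{alg_2}); and (c) the bound $0<V\le V_{\max}$ with (\ref{eq_26}) and $T_{\max}=VC_{\max}+L\lambda_{\max}$, together with $B_{i,\max}\le B_{\max}$, $J_i r_{i,\max}\le J r_{\max}$ and $\sum_{l:X_{li}=1}\lambda_l\le L\lambda_{\max}$.

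For part 1) I would take $B_i(t)\ge B_{i,\max}-\eta^+U_{\max}$, substitute it into the definition of $H_i(t)$, and bound $T_{\max}$ by $V\le V_{\max}$: from (\ref{eq_26}), $T_{\max}\le V_{\max}C_{\max}+L\lambda_{\max}=\min_i\{B_{i,\max}\}-\eta^+U_{\max}-\eta^-Jr_{\max}$. This gives
\begin{equation*}
H_i(t)\ \ge\ \big(B_{i,\max}-\min_i\{B_{i,\max}\}\big)+\eta^-\big(Jr_{\max}-J_ir_{i,\max}\big)\ \ge\ 0,
\end{equation*}
hence $H_i(t)>0$, and the \textbf{P5}-rule then forces $R_i(t)=0$. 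The boundary case $H_i(t)=0$ makes all the above bounds tight; it is excluded as soon as $V<V_{\max}$, and in any case it forces $(\sum_j r_{ij}(t)-U_i(t))^+=0$, so it does not affect the capacity bound that Theorem \ref{thm_2} will derive from this lemma.

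For part 2) I would take $B_i(t)\le\eta^-J_ir_{i,\max}$, so that $H_i(t)\le -T_{\max}=-(VC_{\max}+L\lambda_{\max})$, and evaluate the coefficient of $\Lambda_i$: using $\eta^-\ge 1$, $c(t)\le C_{\max}$ and $\sum_{l:X_{li}=1}\lambda_l\le L\lambda_{\max}$,
\begin{equation*}
H_i(t)\eta^-+Vc(t)+\sum_{l:X_{li}=1}\lambda_l\ \le\ -\eta^-(VC_{\max}+L\lambda_{\max})+VC_{\max}+L\lambda_{\max}\ =\ (1-\eta^-)(VC_{\max}+L\lambda_{\max})\ \le\ 0 .
\end{equation*}
Therefore the minimiser of $\Lambda_i$ is its upper endpoint $D_i^d(t)=(\sum_j r_{ij}(t)-U_i(t))^+$, which is precisely the branch Algorithm \ref{alg_2} selects, establishing 2).

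I expect the statement to be essentially a computation with no deep obstacle; the delicate points will be (i) the bookkeeping that makes the substitution of $V_{\max}$ cancel cleanly once the global constants ($Jr_{\max}$, $B_{\max}$) are kept separate from the per-station ones ($J_ir_{i,\max}$, $B_{i,\max}$), and (ii) the justification of $\sum_{l:X_{li}=1}\lambda_l\le L\lambda_{\max}$, which relies on the multiplier update (\ref{eq_25}) and the initialisation in Algorithm \ref{alg_2} keeping each $\lambda_l$ inside $[0,\lambda_{\max}]$. The only genuine loose end is the borderline $H_i(t)=0$ in part 1), which I would handle as noted above.
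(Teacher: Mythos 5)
Your proposal is correct and follows essentially the same route as the paper's own proof: both parts translate the hypothesis on $B_i(t)$ into a sign condition on $H_i(t)$ via $T_{\max}$ and the definition of $V_{\max}$, then read off $R_i(t)=0$ from the \textbf{P5} rule and the upper-endpoint minimiser of the affine $\Lambda_i$ from \textbf{P6}, using $\eta^-\ge 1$, $c(t)\le C_{\max}$ and $\sum_{l:X_{li}=1}\lambda_l\le L\lambda_{\max}$ exactly as the paper does. You are in fact slightly more careful than the paper, which asserts the strict inequalities $H_i(t)>0$ and $H_i(t)<-(VC_{\max}+L\lambda_{\max})$ while its own chain of bounds only yields the non-strict versions; your explicit handling of the tight boundary case is a small improvement rather than a deviation.
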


\begin{proof}
If $B_{i}\left( t\right)\!\geq\!B_{max}\!-\!\eta ^{+}U_{\max }$, we have $B_{i}\!\left(t\right)\!-\!\left( J_{i}r_{i\max}\eta^{-}\!+\!\left(B_{max}\!\!-\!\!\eta
^{+}U_{\max}\!\!-\!\!\eta ^{-}Jr_{\max }\!\!-\!\!L\lambda _{\max}\right)\!\!+\!\!L\lambda_{\max}\!\right)\!\geq \!0.$
Due to the definition of $V_{\max }$, it is obtained that $%
B_{i}\left( t\right) -\left( J_{i}r_{i\max }\eta ^{-}+V_{\max }C_{\max
}+L\lambda _{\max }\right) \geq 0$, i.e., $H_{i}\left( t\right)>0$.
Then we have $R_{i}\left( t\right) =0$ according to the solution to \textbf{P5.}

If $B_{i}\left( t\right)\!\leq\!\eta ^{-}J_{i}r_{i\max },$ we have $B_{i}\left( t\right)\!-\!\left(\eta ^{-}J_{i}r_{i,max }\!+\!VC_{max }\!+\!L\lambda_{max }\!\right)\!\leq\!-\!\left( VC_{max }\!+\!L\lambda_{max }\!\right)$. Then we have $H_{i}\left(t\right)\!<\!-\!\left( VC_{max }\!+\!L\lambda _{max }\!\right)$, which results in $H_{i}\left(t\right)\!<\!-\!\frac{Vc\left(t\right)\!+\!\sum_{l:X_{li}=1}\lambda _{l}\left( t\right) }{\eta^{-}}$. Thus, $D_{i}^{d}\left(t\right)\!=\!\sum_{j=1}^{J_{i}}\!\left(r_{ij}\!-\!U_{i}\left(t\right)\!\right)^{+}$ is obtained by solving \textbf{P6.}
\end{proof}

Now we give the proof of Theorem 2.
\begin{proof}
The bound of $B_i(t)$ will be proved by induction. At the beginning, the energy level of each battery satisfies:
\vspace{-0.5cm}
\begin{equation*}
0 \leq B_i(0)\leq B_{i,{max}}.
\end{equation*}
\vspace{-1.5cm}
\begin{flushleft}
$a)$ When $T_{max}+\eta^+J_ir_{i,max}\leq B_i(t)\leq B_{i,{max}}$,
$R_{i}\left( t\right) =0$ holds according to the solution to \textbf{P5.} Then
\vspace{-0.3cm}
\begin{equation*}
\begin{small}
B_{i}\!\left( t+1\!\right)\!=\!B_{i}\left( t\right)\!-\!\eta ^{-}\!\left(\!\left(
\sum\limits_{j=1}^{J_{i}}r_{ij}\left( t\right)\!-\!U_{i}\left( t\right)\!\right)
^{+}\!-\!D_{i}^{d}\left( t\right)\!\right)\!\leq\!B_{i}\left( t\right)\!\leq
\!B_{i,\max}.
\end{small}
\end{equation*}
\end{flushleft}
\vspace{-0.7cm}
\begin{flushleft}
$b)$ When $H_i(t)\leq T_{max}\!+\!\eta^+J_ir_{i,max}$, we have $R_{i}\left( t\right)\!=\!\min\!\left\{ R_{i}^{\max },\left( U_{i}\left(
t\right)\!-\!\sum\limits_{j=1}^{J_{i}}r_{ij}\left( t\right)\!\right)^{+}\!\right\}.$ Therefore, it is obtained that
\vspace{-0.5cm}
\begin{eqnarray}
B_{i}\!\left( t+1\!\right)\!\!&\leq&\!\!B_{i}\left( t\right) +\eta ^{+}R_{i}\left(
t\right) <\eta ^{-}Jr_{\max }+VC_{\max }+L\lambda _{\max }+\eta ^{+}\left(
U_{i}\left( t\right) -\sum\limits_{j=1}^{J_{i}}r_{ij}\left( t\right) \right)
^{+}  \nonumber \\
&\leq&\!\!\eta ^{-}Jr_{\max }+VC_{\max }+L\lambda _{\max }+\eta ^{+}U_{\max }
\nonumber \\
&\leq &\!\!\eta ^{-}Jr_{\max }+\min \left\{ B_{i,\max }\right\} -\eta
^{+}U_{\max }-\eta ^{-}Jr_{\max }-L\lambda _{\max }+L\lambda _{\max }+\eta
^{+}U_{\max }  \\ \label{eq_27}
&\leq &\!\!B_{i,\max},  \nonumber
\end{eqnarray}
where (27) is due to the definition of $V_{\max}.$
\end{flushleft}
\vspace{-0.5cm}
\begin{flushleft}
$c)$ When $0\leq B_{i}\left( t\right) \leq \eta ^{-}Jr_{\max },$ due to the
second statement in Lemma 1, we have
\vspace{-0.5cm}
\begin{equation*}
\begin{small}
D_{i}^{d}\left( t\right)\!=\!\left(\sum\limits_{j=1}^{J_{i}}r_{ij}\left( t\right)\!-\!U_{i}\left( t\right)\!\right)^{+},
\end{small}
\end{equation*}which means that the $i$-th station uses grid energy to supply the charging request totally. Therefore, by the dynamics of battery, we have
\vspace{-0.5cm}
\begin{equation*}
\begin{small}
B_{i}\!\left( t+1\!\right)\!\geq\!B_{i}\left( t\right)\!\geq\!0.
\end{small}
\end{equation*}
\vspace{-0.5cm}
\end{flushleft}
\vspace{-1.4cm}
\begin{flushleft}
$d)$ When $B_{i}\left( t\right) \geq \eta ^{-}Jr_{\max },$ we have $R_{i}\left(
t\right) $ according to Lemma 1. Then we have
\vspace{-0.3cm}
\begin{equation*}
\begin{small}
B_{i}\!\left( t+1\!\right)
\!=\!B_{i}\left( t\right)\!-\!\eta ^{-}\!\left(\!\left(
\sum\limits_{j=1}^{J_{i}}r_{ij}\left( t\right)\!-\!U_{i}\left( t\right)\!\right)
^{+}\!-\!D_{i}^{d}\left( t\right)\!\right)\!\geq\!B_{i}\left( t\right)\!-\!\eta
^{-}Jr_{\max }\!\geq\!0.
\end{small}
\end{equation*}
\end{flushleft}

Thus for all time slot $t$, the conclusion in Theorem 2 stands.
\end{proof}
Theorem 2 means that the constraint (\ref{eq_5}) is satisfied at every time slot although the proposed algorithm in Fig. \ref{fig_3} solves $\widetilde\textbf{P1}$. The maximum of Lagrangian multipliers ($\lambda_{max}$) exists because the gradient method in Algorithm 2 converges.
\vspace{-0.5cm}
\subsection{Asymptotic Performance}
Before we state the performance of the proposed algorithm the following Lemma is needed.
\begin{Lemma}
(stationary and randomized policy): If all $E_k(t)$, $U_i(t)$, $c(t)$ are \emph{independent and identically distributed (i.i.d.)} over time slots, then taking all control decisions $\widehat{r_{ij}(t)}$, $\widehat{D_{i}^d(t)}$, and $\widehat{R_{i}(t)}$ every time slot $t$  only as a function of current system state $\omega(t)=\{E_k(t),U_i(t),c(t)|\forall k\!\!\in\!\!\textbf{K}$, $\forall i\!\!\in\!\!\textbf{I}\}$, the constraints mentioned in problem $\widetilde \textbf{P1}$ are satisfied and the following properties stand:
\vspace{-0.3cm}
\begin{equation} \label{eq_28}
\begin{small}
\mathbf{E}\!\left\{ \eta ^{-}\!\left(\!\sum\limits_{j=1}^{J_{i}}\widehat{%
r_{ij}\left( t\right) }\!-\!\widehat{U_{i}\left( t\right) }\!\right) ^{+}\!\right\}\!-\!%
\mathbf{E}\!\left\{ \eta ^{-}\widehat{D_{i}^{d}\left( t\right) }\!\right\}\!=\!%
\mathbf{E}\!\left\{ \eta ^{+}\widehat{R_{i}\left( t\right) }\!\right\},
\end{small}
\end{equation}
\vspace{-1.2cm}
\begin{equation} \label{eq_29}
\textbf{E}\{c(t)\sum_{i=1}^N\widehat{D_{i}^d(t)}\} = P_{re}^\ast,
\end{equation}where $P_{re}^\ast$ is the optimal value of problem $\widetilde \textbf{P1}$, and the expectations are with respect to the stationary distribution of $\omega(t)$ and randomized control decisions.
\end{Lemma}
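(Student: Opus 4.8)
The plan is to obtain the above lemma as a specialization to $\widetilde{\textbf{P1}}$ of the standard existence theorem for optimal stationary randomized policies in Lyapunov optimization \cite{Neely2}. First I would fix, for each value of the exogenous state $\omega=\{E_k,U_i,c\}$, the per--slot feasible decision set $\mathcal{D}(\omega)$ of all tuples $(r_{ij},D_i^d,R_i,W_{k,ij})$ obeying the box constraints $r_{ij}\in[0,r_{i,\max}]$, $R_i\in[0,\min\{R_i^{\max},(U_i-\sum_{j}r_{ij})^{+}\}]$, $D_i^d\in[0,(\sum_{j}r_{ij}-U_i)^{+}]$, together with the matching constraints (\ref{eq_2})--(\ref{eq_3}) and the admission rule (\ref{eq_6}). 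Since $E_k\le E_{\max}$, $U_i\le U_{\max}$ and $c\le C_{\max}$, the state $\omega$ has bounded support and each $\mathcal{D}(\omega)$ is nonempty (take all variables to be zero) and compact. Note that $\mathcal{D}(\omega)$ is itself non--convex, because of the $(\cdot)^{+}$ truncations and the $0/1$ variables $W_{k,ij}$, so the argument has to be run at the level of \emph{randomized} policies, whose attainable performance vectors automatically form a convex set.

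Second, I would pass to the ``one--shot'' performance region: the set of vectors $\left(\mathbf{E}\{c\,D_i^d\},\ \mathbf{E}\{\eta^{-}((\sum_{j}r_{ij}-U_i)^{+}-D_i^d)\}-\mathbf{E}\{\eta^{+}R_i\}\right)_{i\in\textbf{I}}$ attainable when, for each $\omega$, the decision is drawn from a fixed conditional distribution supported on $\mathcal{D}(\omega)$ --- i.e.\ by an $\omega$--only stationary randomized policy --- with the outer expectation taken over the stationary law of $\omega$. Mixing policies makes this region convex, uniform boundedness of all decision and exogenous variables makes it closed and bounded, hence compact, and Carath\'eodory's theorem guarantees that every point of it is realized by a policy using at most finitely many pure actions in each state. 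Under this reduction $\widetilde{\textbf{P1}}$ becomes the finite--dimensional problem of minimizing the cost coordinate subject to each ``balance'' coordinate being nonnegative, which has a minimizer by compactness and continuity. Queue stability requires no separate treatment: exactly as in the proof of Theorem~\ref{thm_1}, the admission rule (\ref{eq_6}) forces $Q_{ij}(t)\le E_{\max}$ for \emph{every} admissible policy, so the demand queues are strongly stable automatically. Hence the minimizing policy is feasible for $\widetilde{\textbf{P1}}$, so its value is $\ge P_{re}^{*}$; and because the i.i.d.\ assumption on $\omega$ guarantees that the infimum of the time--average cost over all causal policies equals that over $\omega$--only policies, its value is also $\le P_{re}^{*}$. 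Therefore it attains $P_{re}^{*}$, which is (\ref{eq_29}).

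Third, I would upgrade the balance inequality to the equality (\ref{eq_28}). If the optimal $\omega$--only policy has strict slack at some station $i$, I would perturb it on a positive--probability set of states on which that station actually draws from its battery, for instance by marginally lowering the charging rates $\widehat{r_{ij}(t)}$ there: this shrinks $(\sum_{j}\widehat{r_{ij}(t)}-\widehat{U_i(t)})^{+}$ and hence the ``energy extracted'' term, keeps $\widehat{D_i^{d}(t)}$ inside its (only weakly shrunk) range, leaves the objective $\mathbf{E}\{c\,D_i^{d}\}$, the constraint (\ref{eq_9}) and the other stations untouched, and keeps all queues bounded. By continuity this drives the slack toward zero, and doing it station by station yields a stationary randomized $\omega$--only policy that still attains $P_{re}^{*}$ and satisfies (\ref{eq_28}), which completes the lemma.

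The step I expect to be the main obstacle is the optimality--preserving reduction to $\omega$--only policies: one must show that the time averages of the cost and balance quantities produced by an \emph{arbitrary} causal (history--dependent) policy always lie in the convex performance region above, so that history dependence cannot beat the stationary--randomized optimum. This is precisely the i.i.d.\ machinery of \cite{Neely2}; the care specific to our setting is to run it on the convex hull of the non--convex sets $\mathcal{D}(\omega)$ and then invoke Carath\'eodory to recover a genuinely implementable randomized policy, and --- a lesser point --- to check that the ``free'' perturbation used to obtain (\ref{eq_28}) really preserves feasibility of (\ref{eq_9}) in the states where it forces $\widehat{D_i^{d}(t)}$ down to its new cap.
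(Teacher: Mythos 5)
Your proposal is correct and is essentially the same approach the paper takes: the paper's own ``proof'' of this lemma is a one-line deferral to the standard existence theorem for optimal $\omega$-only stationary randomized policies (citing \cite{Guo}, \cite{Georgiadis}, \cite{Neely2}), and your argument is a faithful reconstruction of exactly that machinery --- convexifying the per-state action sets, using compactness/Carath\'eodory on the one-shot performance region, reducing arbitrary causal policies to $\omega$-only ones under the i.i.d.\ assumption, and perturbing away the slack to turn the relaxed battery-balance inequality into the claimed equality.
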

\begin{proof}
Similar policy is mentioned in the work of Fang \emph{et al.} in \cite{Guo}. and also proved by Neely \emph{et al.} in \cite{Georgiadis} and \cite{Neely2}. So it is omitted here.
\end{proof}
\begin{Theorem}\label{thm_3}
If $E_k(t)$, $U_i(t)$, $c(t)$  are \emph{i.i.d.} over slots, an upper bound for the time average expected electricity cost under our algorithm can be obtained and the upper bound is as follows:
\vspace{-0.5cm}
\[\lim_{\tau\rightarrow \infty}\frac{1}{\tau}{\sum}_{t=0}^{\tau-1}\textbf{E}\{c(t)\sum_{i=1}^ID_{i}^d(t)\} \leq P^{\ast} + \frac{\delta_{\max}}{V}\].
\end{Theorem}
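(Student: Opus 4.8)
The plan is to run the canonical Lyapunov drift-plus-penalty argument, with inequality (15) as the starting point and the stationary, randomized policy of (28)--(29) as the benchmark. The key observation is that the directing policy of Algorithm 1 together with the solutions of \textbf{P4(a)}, \textbf{P4(b)}, \textbf{P5} and \textbf{P6} are, by construction, the (approximate) minimizers of the right-hand side of (15) over all control actions feasible at slot $t$ given $\overrightarrow{\theta(t)}$: for fixed $\overrightarrow{\theta(t)}$ that right-hand side is linear in $(\overrightarrow{r},\overrightarrow{D^d},\overrightarrow{R})$ over a box, so the case split on $1_{\{Q_{ij}(t)=0\}}$ and on whether $\sum_{j}r_{ij}(t)\ge U_i(t)$, together with the Lagrange-dual treatment of the coupling constraint (9), separates the minimization without a duality gap, the residual gap of the finite-precision iteration (25) being controlled by Lemma 1. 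Hence, for \emph{any} competing feasible policy the right-hand side of (15) evaluated under our algorithm is no larger; in particular I would substitute the stationary, randomized decisions $\widehat{r_{ij}(t)}$, $\widehat{D_i^d(t)}$, $\widehat{R_i(t)}$, which depend on $\omega(t)$ alone and are therefore independent of $\overrightarrow{\theta(t)}$.

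Next I would evaluate the resulting bound. Because those decisions are independent of $\overrightarrow{\theta(t)}$, every conditional expectation in (15) reduces to an expectation over $\omega(t)$ only, and each $H_i(t)$ may be pulled out as a constant; gathering the terms proportional to $H_i(t)$ reconstructs $H_i(t)$ times the expected one-slot drift of the shifted battery queue, which is zero by (28). The remaining terms carrying a factor $Q_{ij}(t)$ or $\sum_k E_k(t)W_{k,ij}(t)$ are nonpositive and may simply be dropped from the upper bound, while the penalty term collapses to $V\,\mathbf{E}\{c(t)\sum_i\widehat{D_i^d(t)}\}=VP_{re}^{*}$ by (29). This leaves, for every $t$,
\[
\triangle(\overrightarrow{\theta(t)})+V\,\mathbf{E}\Big\{c(t)\sum_{i=1}^{I}D_i^d(t)\,\Big|\,\overrightarrow{\theta(t)}\Big\}\;\le\;\delta_{\max}+VP_{re}^{*}.
\]

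Finally I would take total expectations to remove the conditioning, use the definitions of the Lyapunov function and drift (11)--(12) to rewrite the left-hand side as $\mathbf{E}\{\mathcal{L}(\overrightarrow{\theta(t+1)})\}-\mathbf{E}\{\mathcal{L}(\overrightarrow{\theta(t)})\}+V\mathbf{E}\{c(t)\sum_iD_i^d(t)\}$, sum over $t=0,\dots,\tau-1$ so that the Lyapunov terms telescope, drop the nonnegative $\mathbf{E}\{\mathcal{L}(\overrightarrow{\theta(\tau)})\}$, and divide by $V\tau$. Since $\mathbf{E}\{\mathcal{L}(\overrightarrow{\theta(0)})\}$ is finite --- $Q_{ij}(0)\le E_{\max}$ by Theorem 1 and $0\le B_i(0)\le B_{i,\max}$, so $H_i(0)$ is bounded --- letting $\tau\to\infty$ yields $\limsup_{\tau}\frac1\tau\sum_{t=0}^{\tau-1}\mathbf{E}\{c(t)\sum_iD_i^d(t)\}\le P_{re}^{*}+\delta_{\max}/V$. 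The claimed bound then follows from $P_{re}^{*}\le P^{*}$, established when $\widetilde{\textbf{P1}}$ was introduced; note that $\delta_{\max}$ does not depend on $V$, so the gap vanishes as $V$ grows.

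The step I expect to be the real obstacle is the first one: arguing rigorously that the decoupled per-outlet / per-generator / per-station rules \textbf{P3}--\textbf{P6}, together with the finite-precision dual iteration (25), minimize (15) --- or at least minimize it up to a term that can be absorbed into $\delta_{\max}$ --- and doing the sign bookkeeping that makes the $H_i(t)$-terms collapse through (28) while the $Q_{ij}(t)$-terms come out with the sign needed for them to be discarded. A secondary point is verifying that the stationary policy of (28)--(29) is admissible in that per-slot minimization, i.e.\ that its decisions obey the box constraints of \textbf{P4}--\textbf{P6} and the deterministic surrogate (9) of the chance constraint; this is exactly where the \emph{i.i.d.} assumption and the stationary-policy lemma are invoked.
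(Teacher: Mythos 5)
Your proposal is correct and follows essentially the same route as the paper's own proof: start from the drift-plus-penalty bound (15)/(30), invoke the per-slot (approximate) minimization by the solutions of \textbf{P3}$\sim$\textbf{P6}, compare against the $\overrightarrow{\theta(t)}$-independent stationary randomized policy so that (28) annihilates the $H_i(t)$ terms and (29) yields $VP_{re}^{*}\leq VP^{*}$, then telescope, divide by $V\tau$, and let $\tau\to\infty$. The obstacle you flag --- rigorously justifying that the decoupled rules and the finite-precision dual iteration minimize the right-hand side of (15) --- is glossed over in the paper's proof in exactly the same way, so your treatment is no less complete than the original.
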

\vspace{-1.2cm}
\begin{proof}
From (\ref{eq_15}), the following inequality is tenable:
\vspace{-0.3cm}
\begin{equation} \label{eq_30}
\begin{split}
&\triangle(\overrightarrow{\theta(t)})\!\!+\!\!V\textbf{E}\{c(t)\!\!\sum_{i=1}^{I}D_i^d(t)]|\overrightarrow{\theta(t)}\!\}\!\leq\! \delta_{max}\!-\!\sum\limits_{i=1}^{I}\!\mathbf{E}\!\left\{\sum\limits_{j=1}^{J_{i}}r_{ij}\left( t\right) Q_{ij}\left( t\right) \cdot
1_{\left\{ Q_{ij}\left( t\right) >0\right\} }\!\left\vert \overrightarrow{%
\theta \left( t\right) }\!\right. \!\right\} \\
&\!-\!\sum\limits_{i=1}^{I}\!\mathbf{E}\!\left\{\!\sum\limits_{j=1}^{J_{i}}r_{ij}\left( t\right)
\!\sum\limits_{k=1}^{K}\!E_{k}\left( t\right) W_{k,ij}\left( t\right) \cdot
1_{\left\{ Q_{ij}\left( t\right) =0\right\} }\left\vert \overrightarrow{%
\theta \left( t\right) }\right. \right\}\!+\!\sum\limits_{i=1}^{I}\!\mathbf{E}\!\left\{ H_{i}\left( t\right) R_{i}\left(
t\right) \eta ^{+}\!\left\vert \overrightarrow{\theta \left( t\right) }\!\right.
\!\right\} \\
&\!-\!\sum\limits_{i=1}^{I}\!\mathbf{E}\!\left\{ H_{i}\left( t\right) \eta
^{-}\!\left(\!\left(\!\sum\limits_{j=1}^{J_{i}}r_{ij}\left( t\right)
\!-\!U_{i}\left( t\right)\!\right) ^{+}\!-\!D_{i}^{d}\left( t\right)\!\right)
\left\vert \overrightarrow{\theta \left( t\right) }\right.\!\right\} +V%
\!\mathbf{E}\!\left\{ c\left( t\right)\!\sum\limits_{i=1}^{I}\!D_{i}^{d}\left(
t\right)\!\left\vert\!\overrightarrow{\theta \left( t\right) }\!\right. \right\},
\end{split}\end{equation}since $-T_{max}\!-\!\eta^-J_{i}r_{i,max}\!\leq\!H_i(t)\!\leq\!B_{i,max}\!-\!T_{max}\!-\!\eta^-J_{i}r_{i,max}$ and $B_{i,max}\!>\!T_{max}\!+\!\eta^-J_{i}r_{i,max}.$

Solutions to \textbf{P3}$\sim$\textbf{P6} are willing to minimize the R.H.S of (\ref{eq_30}) at each time slot $t$ by choosing control decisions among all feasible actions including the optimal one given in Lemma 1. Considering that the policy stated in Lemma 1 is independent on the queues vector  $\overrightarrow{\theta(t})$, substitute the optimal decisions $(\widehat {r_{ij}(t)},\widehat {D_{i}^d(t)},\widehat {R_{i}(t)})$ and the conclusion of (\ref{eq_28}) and (\ref{eq_29}) in Lemma 1 into (\ref{eq_30}), and use $P_{re}^{\ast}\leq P^{\ast}$:
\vspace{-0.3cm}\begin{equation*}\begin{split}
&\triangle(\overrightarrow{\theta(t)})+V\textbf{E}\{c(t)\sum_{i=1}^{I}D_i^d(t)|\overrightarrow{\theta(t)}\}\\ &\leq\!\!\delta_{max}\!+\!\sum\limits_{i=1}^{I}\mathbf{E}\left\{ H_{i}\left( t\right) \widehat{R_{i}\left( t\right) }\eta ^{+}\overrightarrow{\theta \left(t\right)}\right\}\!\!-\!\!\sum\limits_{i=1}^{I}\mathbf{E}\left\{
H_{i}\left( t\right) \eta ^{-}\left( \left( \sum\limits_{j=1}^{J_{i}}%
\widehat{r_{ij}\left( t\right) }\!-\!\widehat{U_{i}\left( t\right) }\right)^{+}\!\!-\!\!
\widehat{D_{i}^{d}\left( t\right) }\right) \overrightarrow{\theta
\left( t\right) } \right\} \\
&\!+\!V\mathbf{E}\!\left\{ c\left( t\right)\sum\limits_{i=1}^{I}\widehat{D_{i}^{d}\left( t\right) }\left\vert
\overrightarrow{\theta \left( t\right) }\right. \!\right\}\!=\!\delta_{max}+V\textbf{E}\{c(t)\widehat{D_{i}^d(t})|\overrightarrow{\theta(t})\}
\!=\!\delta_{max}\!+\!VP_{re}^{\ast}
\!\leq\!\delta_{max}\!+\!VP^{\ast}.\end{split}\end{equation*}

Sum over $t \in \{0,1,...,\tau-1\}$ , divide $V\tau$ on both sides, and let $\tau\rightarrow \infty$:
\vspace{-0.5cm}
\[\lim_{\tau\rightarrow \infty}\frac{1}{\tau}{\sum}_{t=0}^{\tau-1}\textbf{E}\{c(t)\sum_{i=1}^ID_{i}^d(t)\} \leq P^{\ast} + \frac{\delta_{\max}}{V}\]holds due to the fact that $\textbf{E}\{\mathcal{L}(\overrightarrow{\theta(\tau)})\}$ is nonnegative and $\textbf{E}\{\mathcal{L}(\overrightarrow{\theta(0)})\}$ is finite.
\end{proof}
\begin{Remark}\label{rem_2}
It is noted that the performance gap between the proposed algorithm and the optimal one for \textbf{P1} can be decreased by an increasing $V$, which means the capacity of each battery should be increased according to (\ref{eq_26}).
\end{Remark}
\vspace{-0.5cm}
\section{Simulation}
\vspace{-0.5cm}
\subsection{Simulation Setup}
We evaluate our designed algorithms using Matlab on a 19-bus test feeder \cite{Yunus} shown in Fig. \ref{fig_topo}. The system contains 50 entry points and 18 charging stations, each of which has three charging outlets. All the parameters are set as follows except for other specifications. The capacity of each battery in different stations is 500kWh. Let $r_{1,max}\!=...=\!r_{I,max}\!=\!20$kW, $U_{max}\!=\!225$kW, $D^{d,max}\!=\!20$kW, and $E_{max}\!=\!30$kWh. Assume the uncontrollable loads at each node follow gaussian distributions where the expectation is $\textbf{E}\{N_l(t)\}\!=\!200$kW, and the standard deviation is $\sigma_{N_l}\!=\!100$kW. PEVs charging requests arrive to each entry points according to a geometric distribution with probability 0.9. The charging and discharging efficiency parameters are first assumed to be $\eta^-=\eta^+=1$.
\begin{figure}
\centering
\includegraphics[width=100mm]{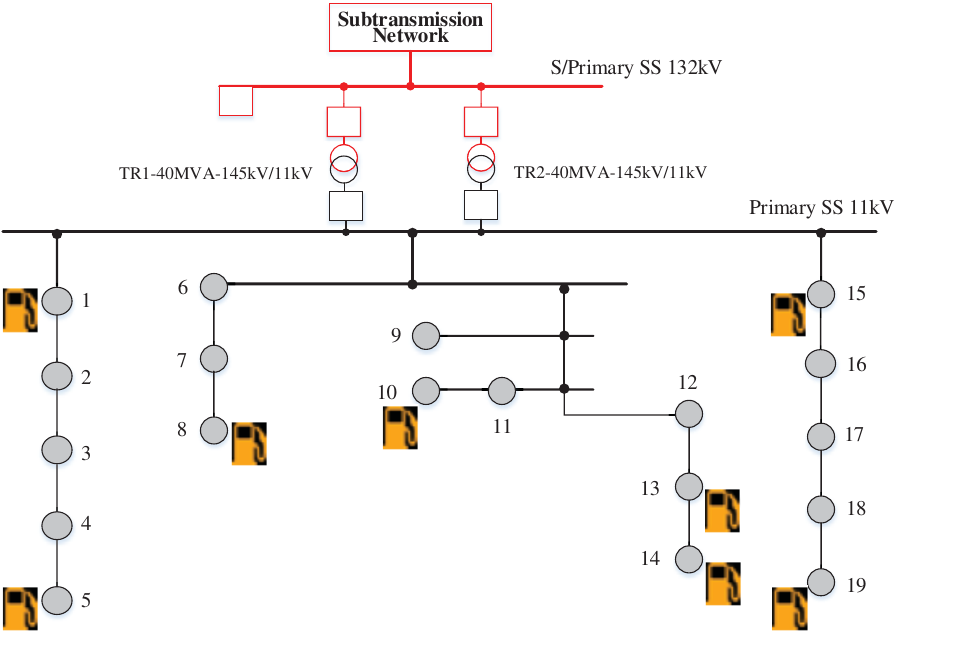}
\caption{A 19-bus bus test feeder.\label{fig_topo}}
\end{figure}
Real wind speed data are taken from Alternative Energy Institute (AEI) \cite{Wind}, and converted to wind energy generation based on the power curve of the Vestas V27 225 kW wind turbine. Fig. \ref{fig_4} shows one wind turbine's generation data. Real-time electricity price data are from Power Smart Pricing administered for Ameren Illinois \cite{Pricing}, which is shown in Fig. \ref{fig_5}. Both of them are updating every one hour. By linear interpolation for electricity price and spline interpolation for wind energy, the interval of these data is 10min, equal to the length of one time slot. The total length of data is 60h.
\begin{figure}
\begin{minipage}[t]{0.5\textwidth}
\centering
\includegraphics[width=70mm]{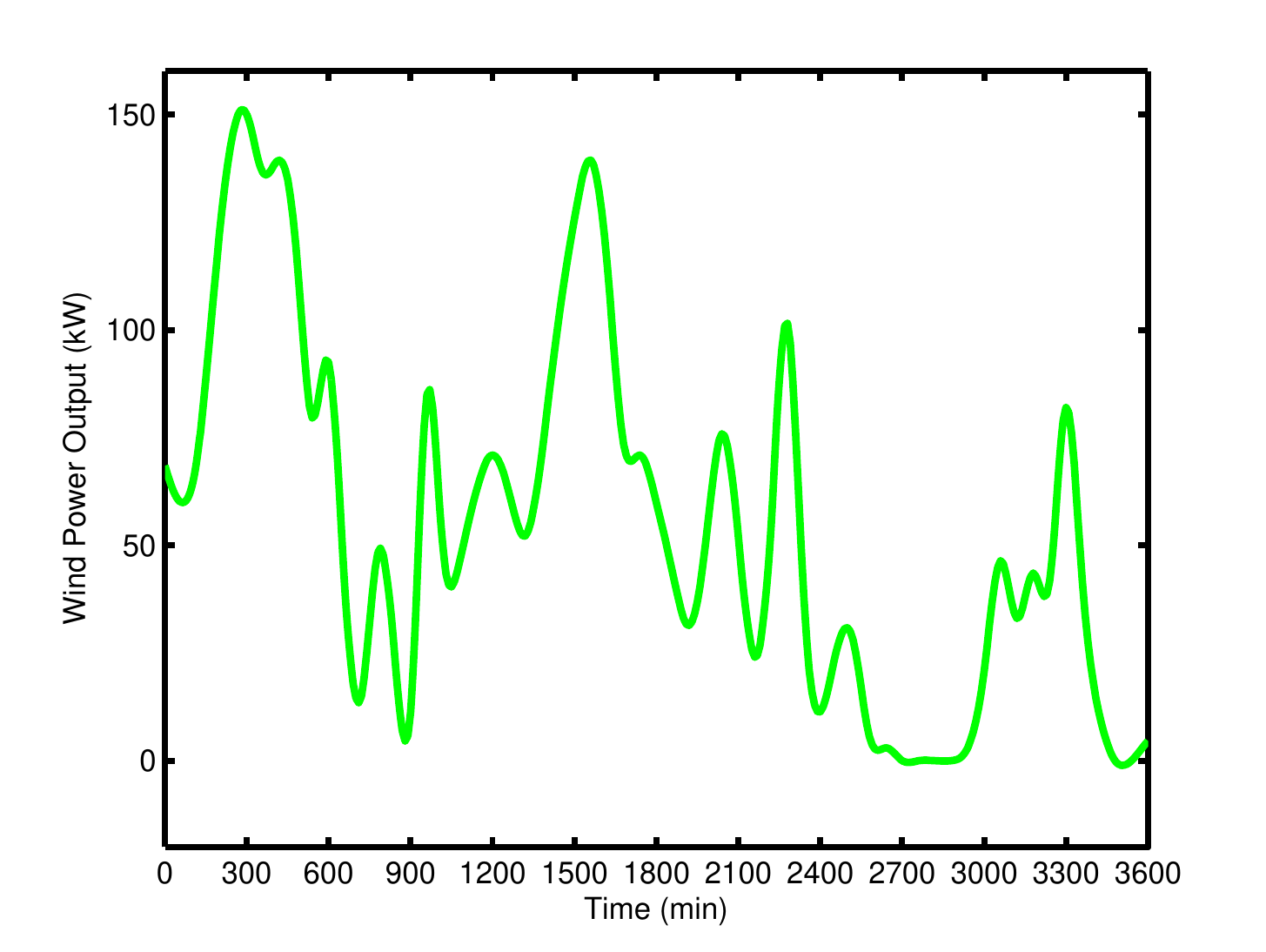}
\caption{Wind power at charging station 7.}
\label{fig_4}
\end{minipage}%
\begin{minipage}[t]{0.5\textwidth}
\centering
\includegraphics[width=70mm]{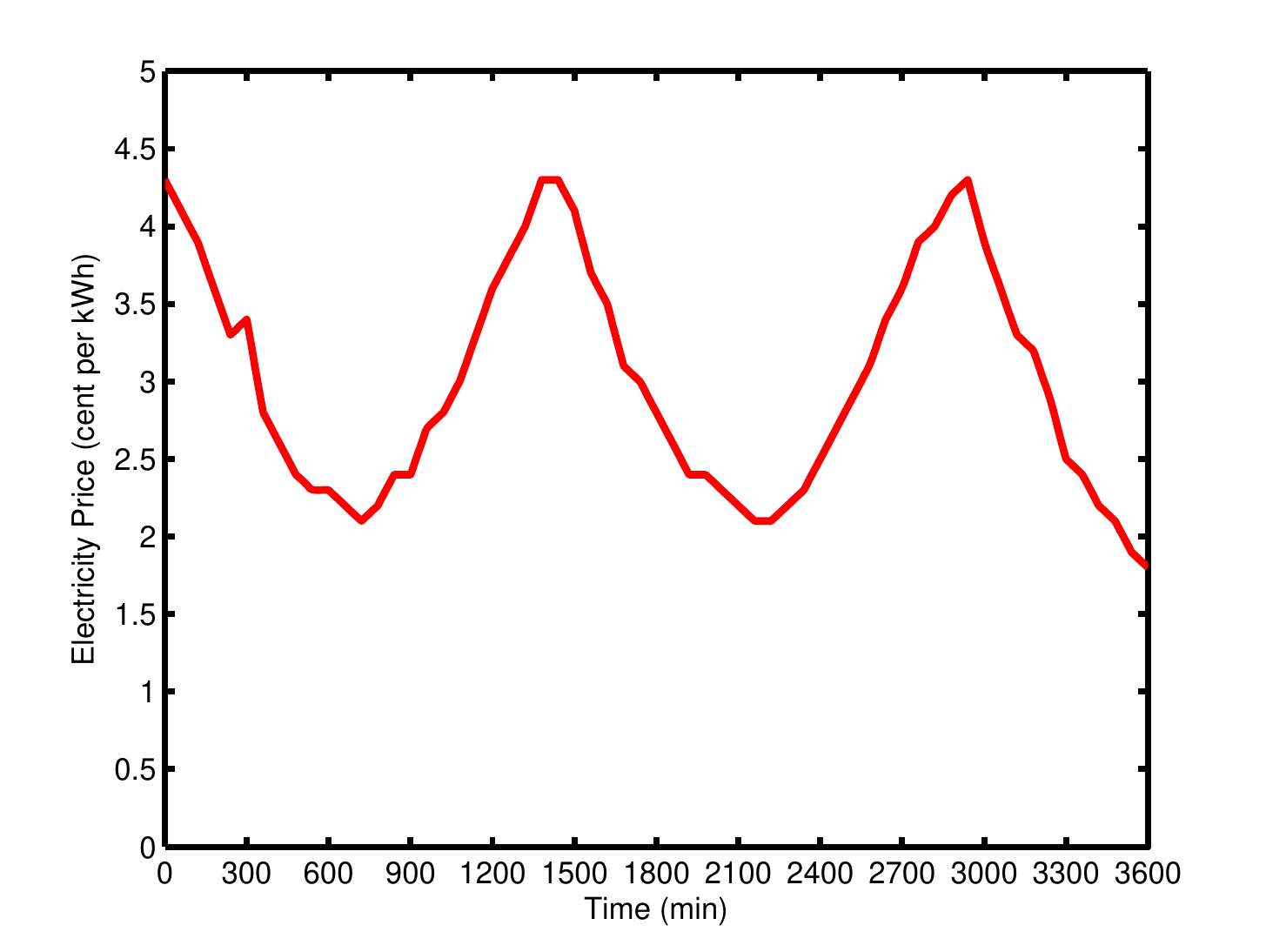}
\caption{Real-time market price.}
\label{fig_5}
\end{minipage}
\end{figure}
\vspace{-0.5cm}
\subsection{Numerical Results}
Fig. \ref{fig_6} illustrates the convergent values for the 8 batteries' energy level under the proposed algorithm. The values are approximately 450kWh, falling in the feasible region of batteries. The existence of convergence for batteries' energy level indicates that the charging and discharging power amounts are roughly equal.
\begin{figure}
\begin{minipage}[t]{0.5\textwidth}
\centering
\includegraphics[width=70mm]{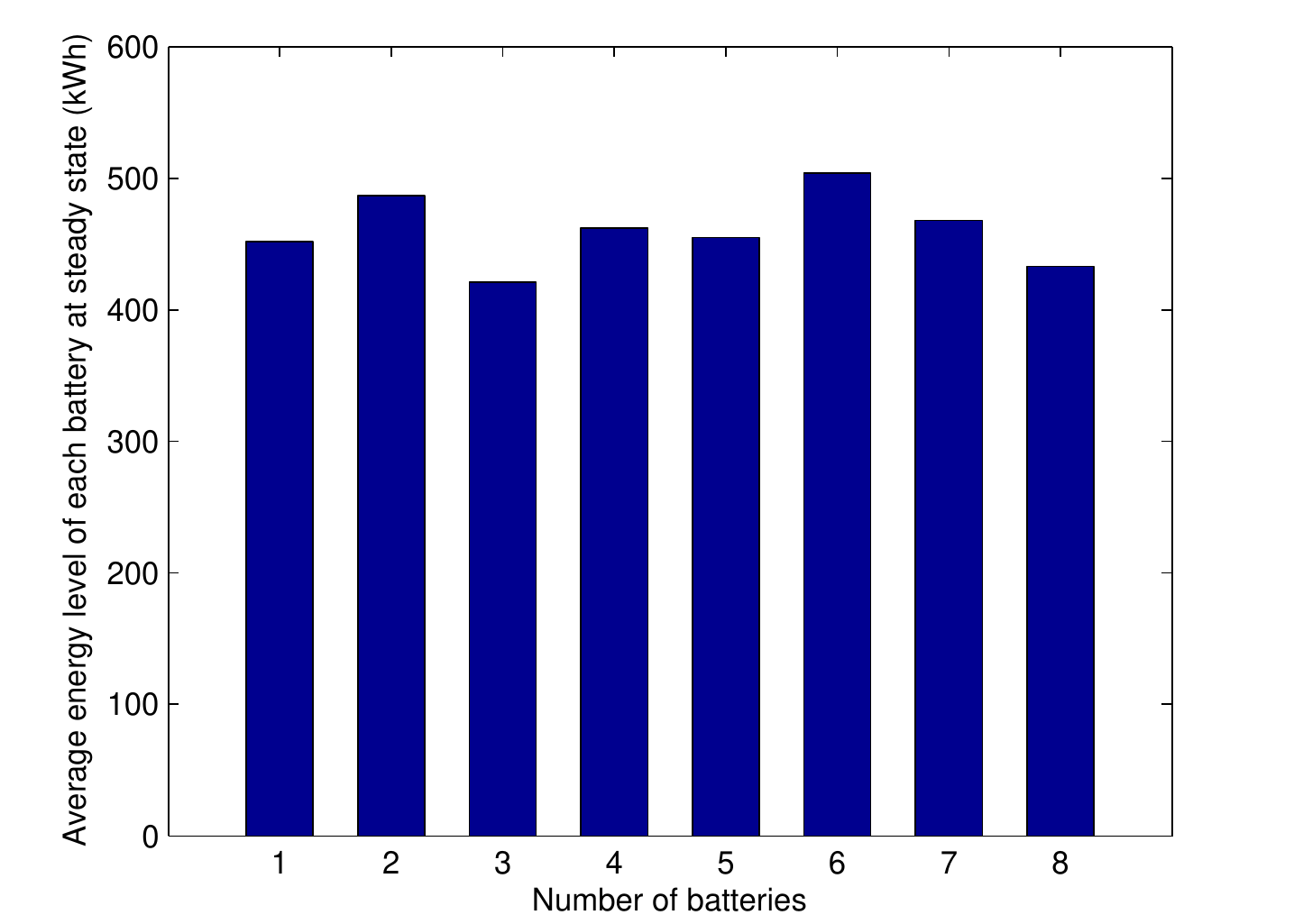}
\caption{Average energy level of each battery at steady state. }\label{fig_6}
\end{minipage}%
\begin{minipage}[t]{0.5\textwidth}
\centering
\includegraphics[width=70mm]{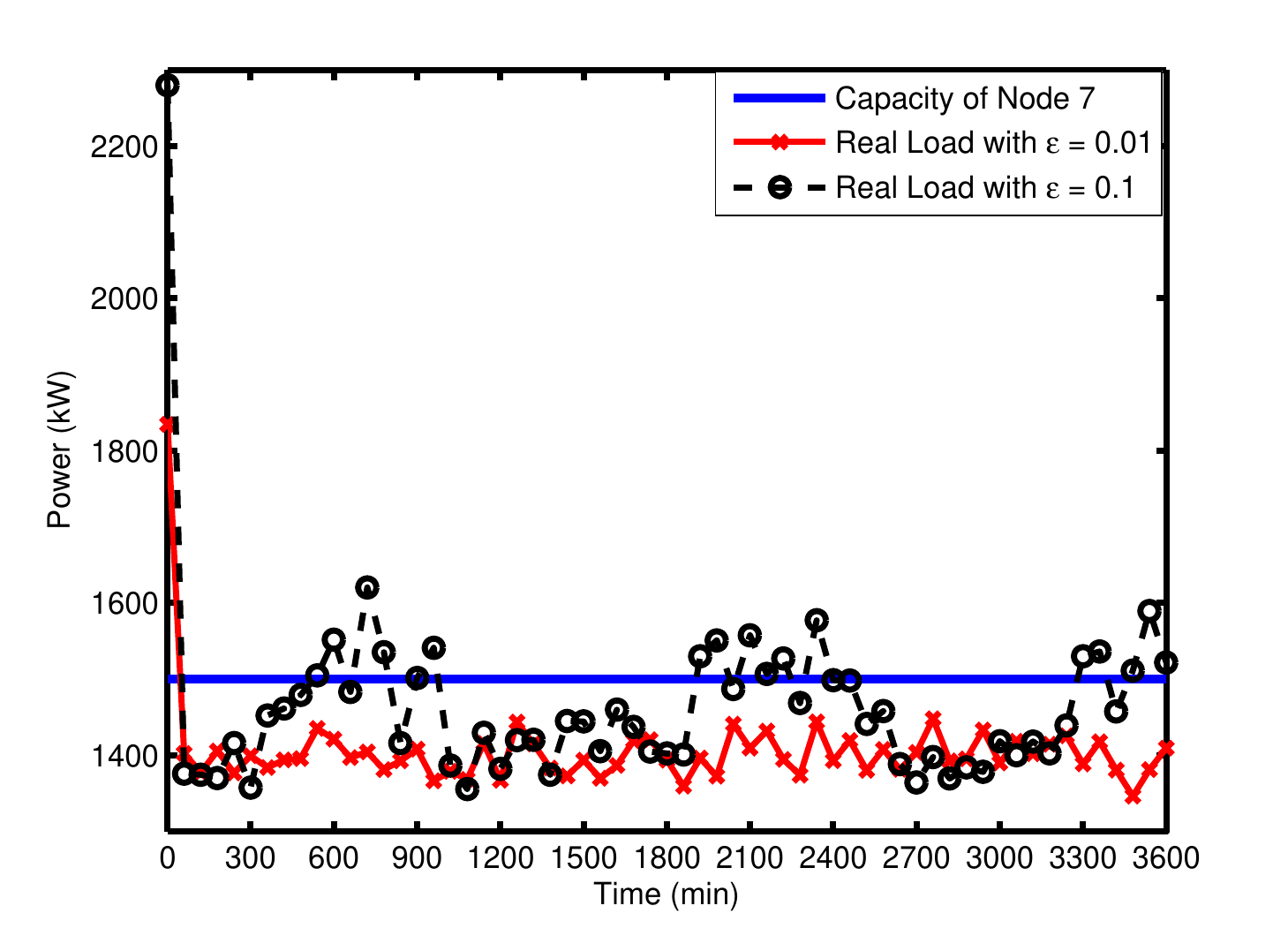}
\caption{Overload: comparison between two different possibilities.}
\label{fig_7}
\end{minipage}
\end{figure}

The probability of overload at node 7 in this distribution network is examined in Fig. \ref{fig_7}. Different values for permitted overload probability $\epsilon$ are taken into account. This figure shows good performance for ``chance constraint'' of the proposed algorithm. When the permitted overload probability is tiny, there is almost no overload possibility for the total loads. Obviously, during 600min-800min, 2100min-2400min, and 3300min-3600min, when the electricity price is low, there are some peaks at the corresponding node.

From Fig. \ref{fig_8}, it is obvious that the system cost rises with respect to the increasing probability of PEV's arrival request at every entry point. Meanwhile, Fig. \ref{fig_8} also illustrates that the larger $V$ is, the less non-renewable energy cost is incurred, which confirms the asymptotically optimal performance of the proposed algorithm shown in Theorem 3. It is also observed that a lower charging efficiency incurs a slightly higher system cost, since the degraded renewable energy utilization results in more on-grid energy consumption.
\begin{figure}
\centering
\includegraphics[width=80mm]{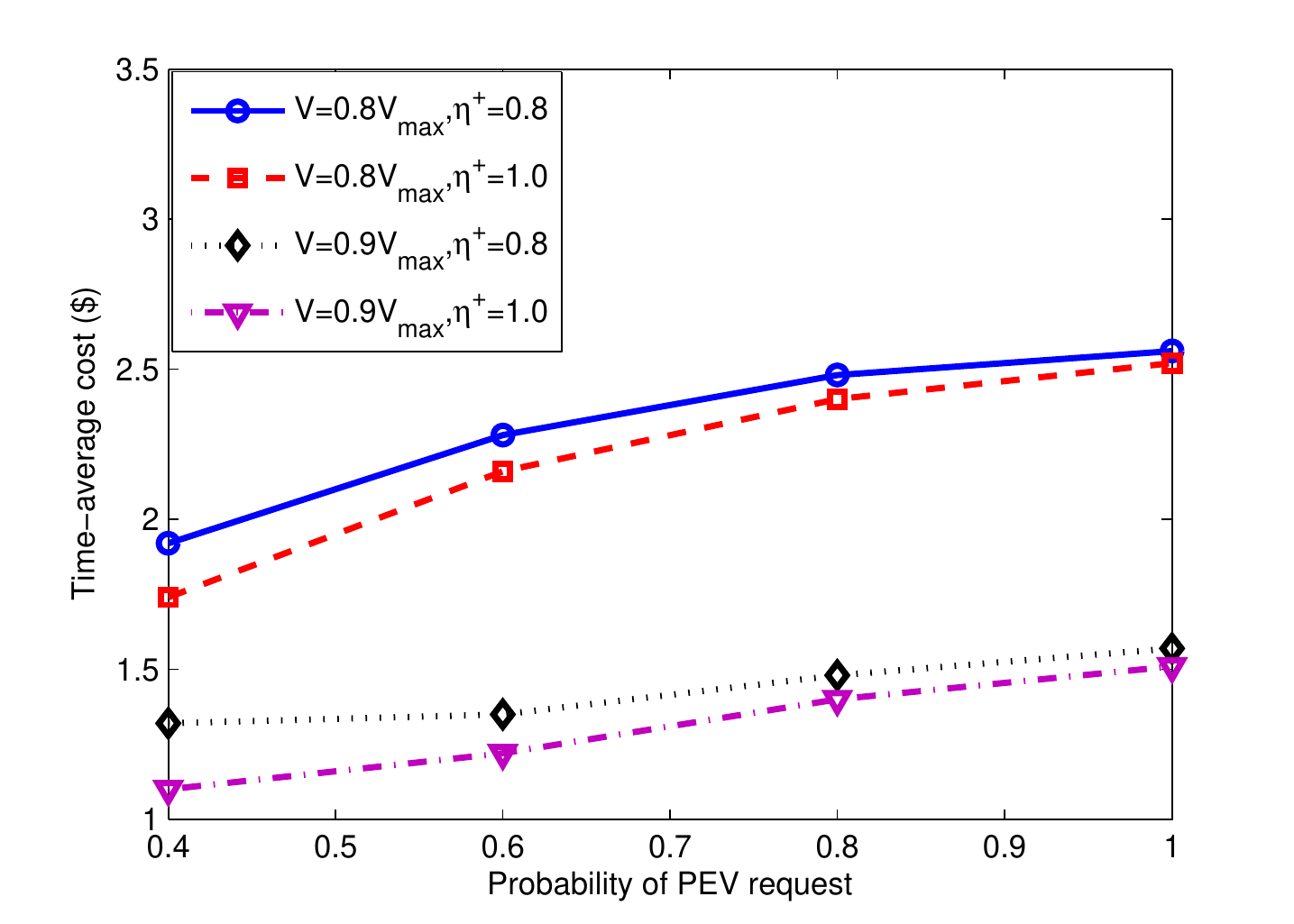}
\caption{Time-average on-grid energy cost v.s. probability of PEVs request.}\label{fig_8}
\end{figure}

For comparison, we consider the charging guidance algorithm in \cite{Ban}. Specifically, the charging guidance designed in \cite{Ban} is aimed to optimize the load balance of charging stations and minimize the system waiting time for charging. For fair comparison, we assume the following adapted algorithm consisting of the charging guidance in \cite{Ban} and the energy management in this paper. At each time slot, the adapted algorithm in \cite{Ban} is described as follows:

Given the overall charging request $\gamma$, the optimal charging request allocated to each charging station $\gamma_i$ can be calculated based on the solution to a waiting time minimization problem, which considers the coupling among charging stations. Then, the new charging request is guided to the corresponding charging station with probability $\gamma_i/\gamma$. For energy usage, the needed energy for minimizing system waiting time is supplied by renewable generator firstly, and the remaining renewable energy is stored in the battery with the feasibly largest proportion. Otherwise, if the renewable energy is insufficient, the stored energy is used. If the energy is still in shortage, the on-grid energy fills the gap.

A greedy algorithm is also proposed for comparison. At each time slot, if there is at least one idle outlet, the PEV will be guided to an idle outlet in the station that has the most residual energy in the storage device. While the charging rate is always fulfilled by the maximum value given that the distribution network constraint is satisfied.

Fig. \ref{fig_9} shows the comparison of the time-average non-renewable energy cost among the three algorithms. It is found that the system time average cost of the three algorithms decreases as enlarging battery size, since larger batteries can store more renewable energy. In general the cost achieved by the designed algorithm is the smallest no matter with different battery charging efficiency since it can utilize the time diversity of electricity price to absorb on-grid energy.

Fig. \ref{fig_10} illustrates that the average waiting time\footnote{Waiting time is defined as the period, beginning with a PEV's request accepted by an outlet and ending with its demand totally satisfied, divided by the amount of its request.} under the three algorithms. The average waiting time of the proposed algorithm grows as $B_{max}$ increasing till 500kWh since small battery can store some renewable energy and admit some PEVs into the system but its residual value is not engough to charge PEVs with full rate mostly. Considering that the energy obtained from the grid will increase the system cost, the proposed algorithm chooses to charge PEVs with lower average rate and thus longer waiting time. However, if the battery size is large enough, the average waiting time descends as battery capacity increasing. For the greedy algorithm and the adapted algorithm in \cite{Ban}, the waiting time is almost independent of the battery size, since these two algorithms charge PEVs with available residual energy in battery, otherwise it charges PEVs with on-grid energy directly. It should be noted that the waiting time of the adapted algorithm in \cite{Ban} is the smallest due to its optimal objective aiming at it. Thus, it can be concluded that the proposed algorithm can result in quicker charging service with less cost if the battery in each station is large enough.
\begin{figure}
\begin{minipage}[t]{0.5\textwidth}
\centering
\includegraphics[width=75mm]{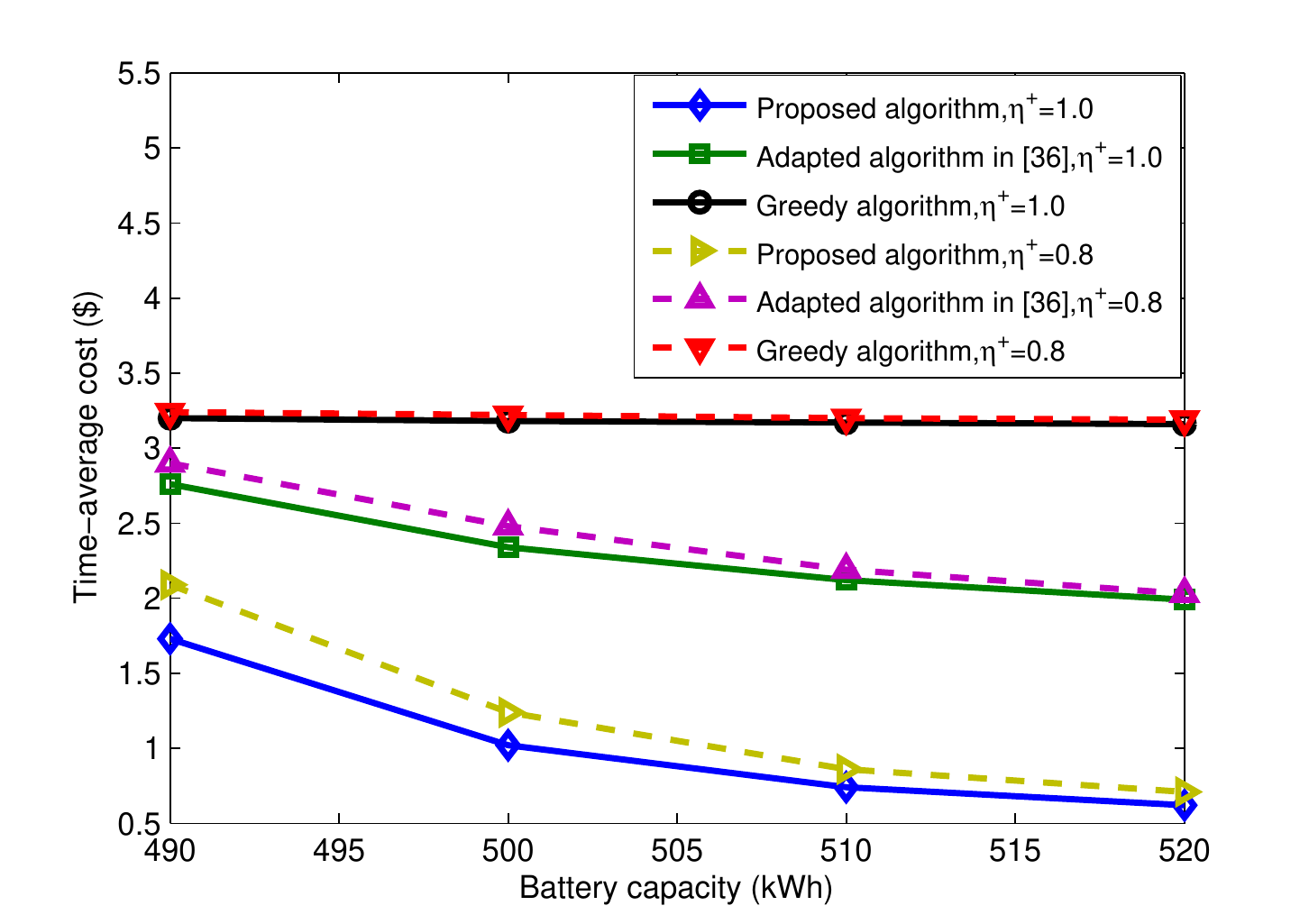}
\caption{Time-average on-grid energy cost v.s. battery capacity.}
\label{fig_9}
\end{minipage}%
\begin{minipage}[t]{0.5\textwidth}
\centering
\includegraphics[width=75mm]{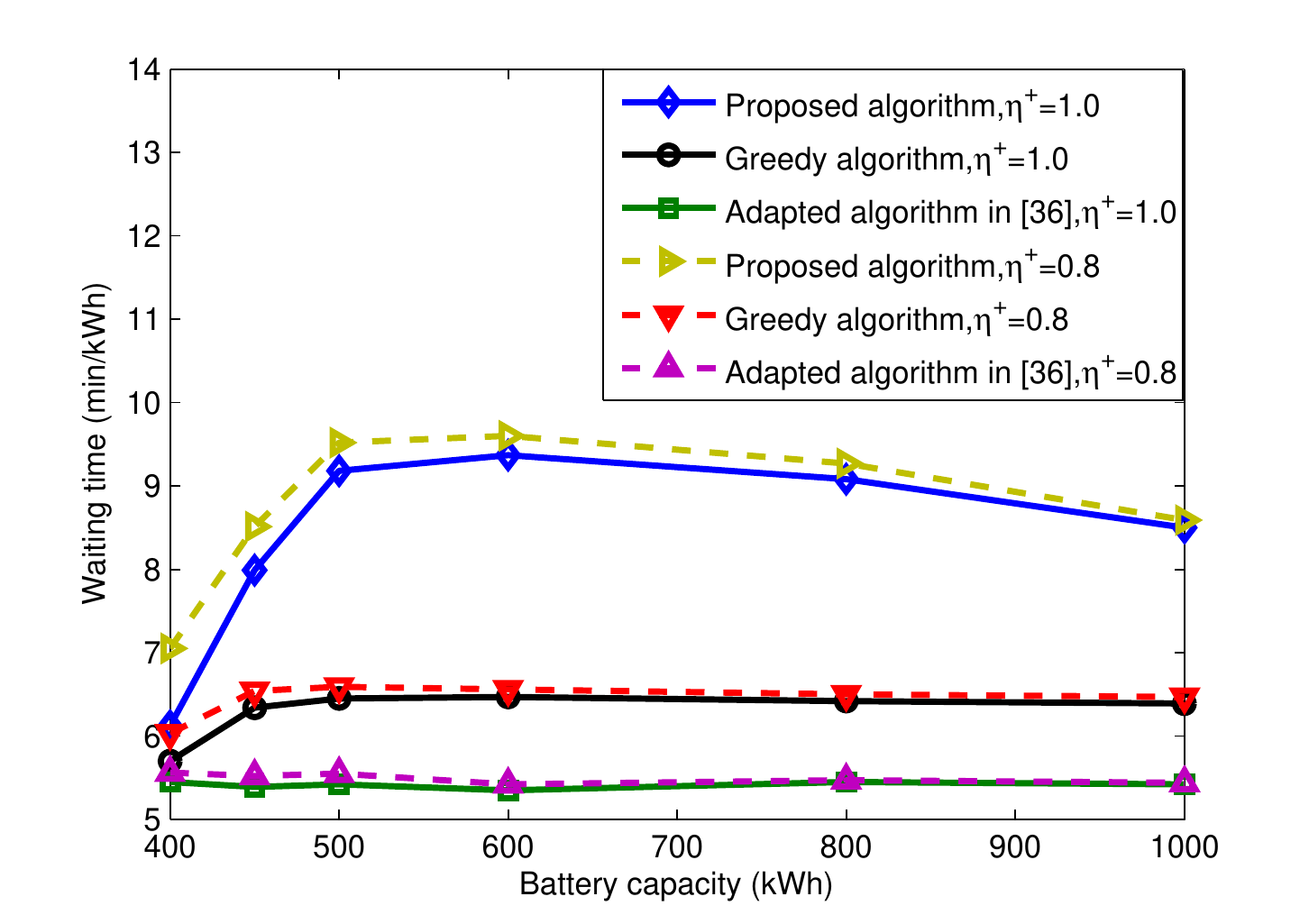}
\caption{Waiting time v.s. battery capacity.}
\label{fig_10}
\end{minipage}
\end{figure}
\begin{figure}
\begin{minipage}[t]{0.5\textwidth}
\centering
\includegraphics[width=75mm]{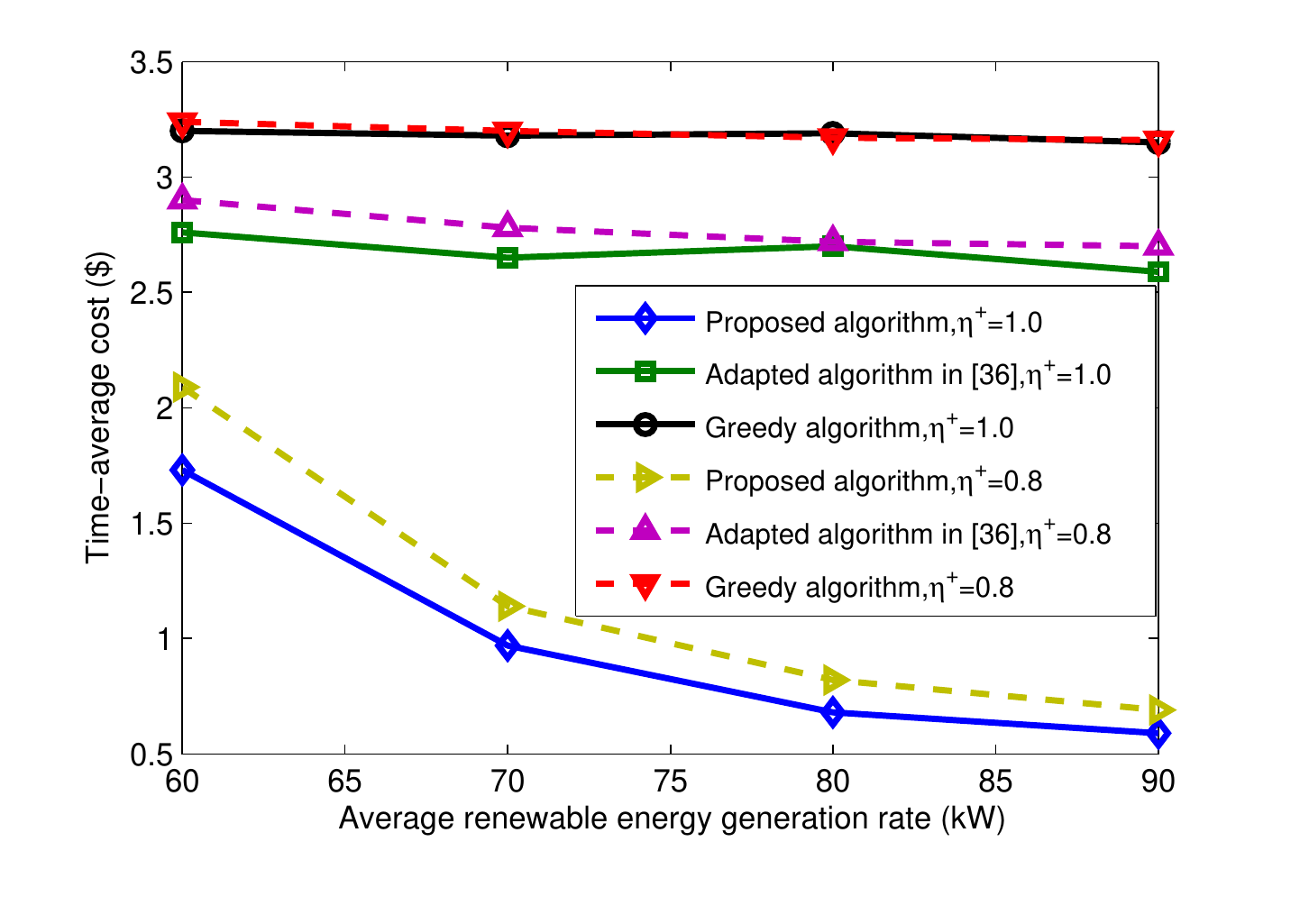}
\caption{Time-average on-grid energy cost v.s. renewable energy rate. }
\label{fig_11}
\end{minipage}%
\begin{minipage}[t]{0.5\textwidth}
\centering
\includegraphics[width=75mm]{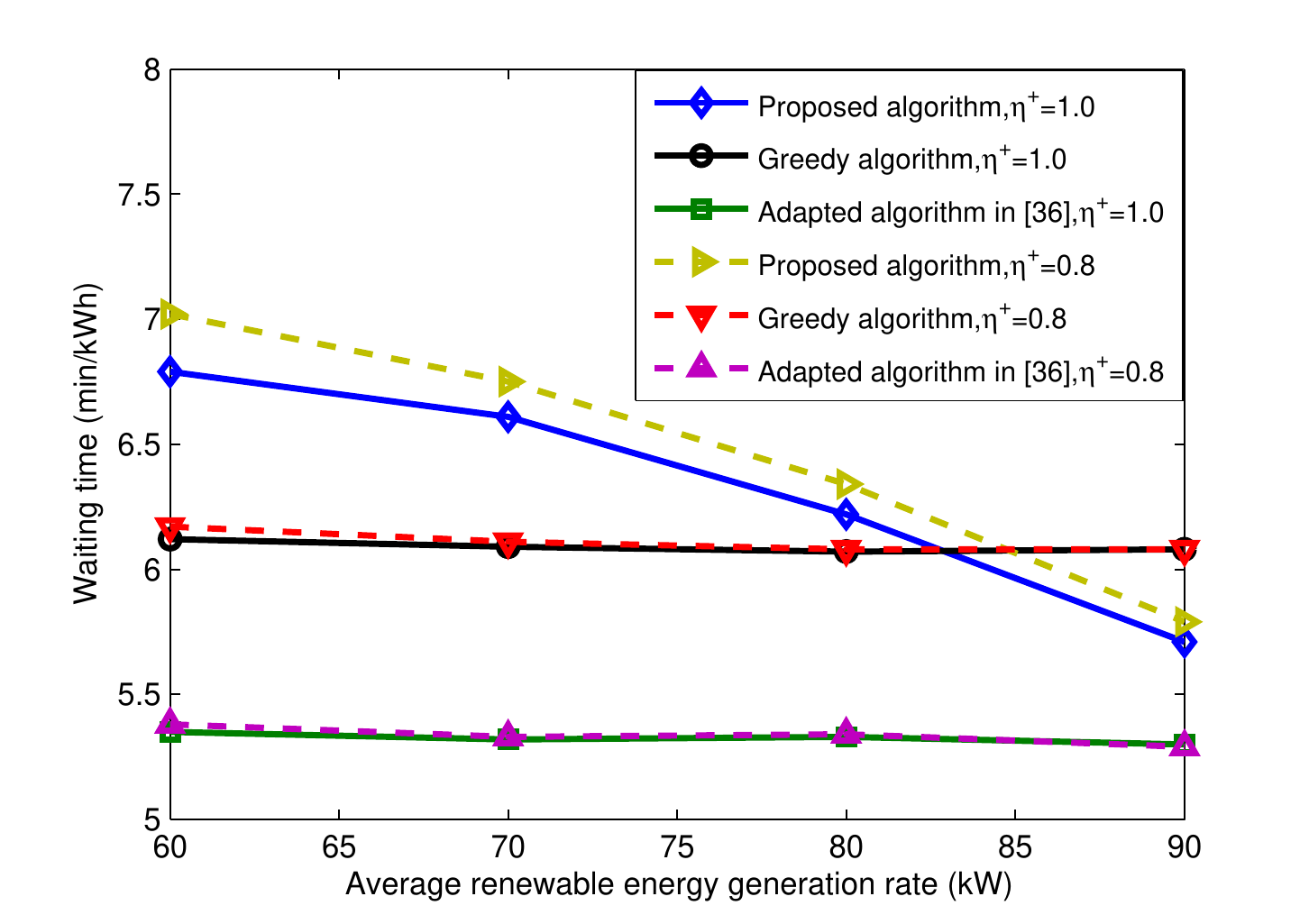}
\caption{ Waiting time v.s. renewable energy rate.}
\label{fig_12}
\end{minipage}
\end{figure}

We further evaluate the effects of average renewable energy generation rate on the time-average cost and waiting time in Fig. \ref{fig_11} and Fig. \ref{fig_12}, respectively. The similar pattern is found in Fig. \ref{fig_11} as in Fig. \ref{fig_9}, except that the time-average cost of the proposed algorithm decreases faster as the increase of renewable energy generation rate. This is due to that the generated renewable energy is utilized directly and then the residual is charged into storage for future use. Thus, more on-grid energy is saved than in the case of Fig. \ref{fig_9} for our design. Due to the same reason, it is found in Fig. \ref{fig_12} that the average waiting of our algorithm behaves better than the greedy one after a threshold generation rate.

\section{Conclusion}
This paper considers the dynamic charging problem for PEVs in a distribution network, which consists of multiple charging stations and uncontrollable non-EV loads. Each charging station can supply random arrival PEVs with energy from local renewable energy generator directly, the storage device or the distribution network. In order to cut down the on-grid energy cost and avoid overload risk in lines/transformers, Lyapunov optimization and chance constraints are utilized to develop dynamic charging guidance for PEVs and energy management scheme for stations with inefficient storage device. The proposed algorithm does not rely on the statistics of underlying processes but has provable performance. Simulation results with real wind and electricity price profiles are provided to show that the proposed algorithm can achieve lower cost with comparable waiting time compared with peer work and a greedy algorithm.
\end{spacing}
\begin{spacing}{1.5}

\end{spacing}
\end{document}